\title{Robust Kronecker Product PCA for Spatio-Temporal Covariance Estimation                      }
\newtheorem{theorem}{Theorem}[section]
\newtheorem{corollary}[theorem]{Corollary}
\begin{document}
\author{Kristjan~Greenewald,~\IEEEmembership{Student Member,~IEEE,} and~Alfred O.~Hero III,~\IEEEmembership{Fellow,~IEEE}

\thanks{K. Greenewald and A. Hero III are with the Department
of Electrical Engineering and Computer Science, University of Michigan, Ann Arbor,
MI, USA. This research was partially supported by grants from AFOSR FA8650-07-D-1220-0006 and ARO MURI W911NF-11-1-0391.}
}
  \maketitle
\begin{abstract}

Kronecker PCA involves the use of a space vs. time Kronecker product decomposition to estimate spatio-temporal covariances. 
In this work  the addition of a sparse correction factor is considered, which corresponds to a model of the covariance as a sum of Kronecker products of low (separation) rank and a sparse matrix. This sparse correction extends the diagonally corrected Kronecker PCA of \cite{greenewaldArxiv,greenewaldSSP2014} to allow for sparse unstructured ``outliers" anywhere in the covariance matrix, e.g. arising from variables or correlations that do not fit the Kronecker model well, or from sources such as sensor noise or sensor failure. We introduce a robust PCA-based algorithm to estimate the covariance under this model, extending the rearranged nuclear norm penalized LS Kronecker PCA approaches of \cite{greenewaldSSP2014,tsiliArxiv}. An extension to Toeplitz temporal factors is also provided, producing a parameter reduction for temporally stationary measurement modeling. High dimensional MSE performance bounds are given for these extensions.
Finally, the proposed extension of KronPCA is evaluated on both simulated and real data coming from yeast cell cycle experiments. This establishes the practical utility of robust Kronecker PCA in biological and other applications.

\end{abstract}


\section{Introduction}
\IEEEPARstart{I}{n} this paper, we develop a method for robust estimation of spatio-temporal covariances and apply it to multivariate time series modeling and parameter estimation. The covariance for spatio-temporal processes is manifested as a multiframe covariance, i.e. as the covariance not only between variables or features in a single frame (time point), but also between variables in a set of $p_t$ nearby frames. 
If each frame contains $p_s$ spatial variables, then the spatio-temporal covariance at time $t$ is described by a $p_t p_s$ by $p_t p_s$ matrix
\begin{equation}
\mathbf{\Sigma}_t = \mathrm{Cov}\left[\{\mathbf{I_n}\}_{n=t-p_t}^{t-1}\right],
\end{equation}
where $\mathbf{I}_n$ denotes the $p_s$ variables or features of interest in the $n$th frame. We make the standard piecewise stationarity assumption that $\mathbf{\Sigma}_t$ can be approximated as unchanging over each consecutive set of $p_t$ frames.

%

As $p_s p_t$ can be very large, even for moderately large $p_s$ and $p_t$ the number of degrees of freedom ($p_s p_t(p_s p_t+1)/2$) in the covariance matrix can greatly exceed the number $n$ of training samples available to estimate the covariance matrix. One way to handle this problem is to introduce structure and/or sparsity into the covariance matrix, thus reducing the number of parameters to be estimated. 

A natural non-sparse option is to introduce structure by modeling the covariance matrix $\mathbf{\Sigma}$ as the Kronecker product of two smaller symmetric positive definite matrices, i.e.
\begin{equation}
\label{KronApprox}
\mathbf{\Sigma} = \mathbf{A}\otimes \mathbf{B}.
\end{equation}
When the measurements are Gaussian with covariance of this form they are said to follow a matrix-normal distribution \cite{dutilleulFLIPFLOP,dawid1981some,tsiligkaridis2013convergence}. This model lends itself to coordinate decompositions \cite{tsiliArxiv}, such as the decomposition between space (variables) vs. time (frames) natural to spatio-temporal data \cite{tsiliArxiv,greenewaldArxiv}. In the spatio-temporal setting, the $p_s \times p_s$ $\mathbf{B}$ matrix is the ``spatial covariance" and the $p_t \times p_t$ matrix $\mathbf{A}$ is the ``time covariance," both identifiable up to a multiplicative constant. 

An extension to the representation \eqref{KronApprox}, discussed in \cite{tsiliArxiv}, approximates the covariance matrix using a sum of Kronecker product factors
\begin{equation}
\label{SumApprox}
\mathbf{\Sigma} = \sum\nolimits_{i=1}^{r} \mathbf{A}_i \otimes \mathbf{B}_i,
\end{equation}
where $r$ is the separation rank, $\mathbf{A}_i \in \mathbb{R}^{p_t \times p_t}$, and $\mathbf{B}_i \in \mathbb{R}^{p_s \times p_s}$. We call this the Kronecker PCA (KronPCA) covariance representation.

This model (with $r>1$) has been used in various applications, including video modeling and classification \cite{greenewaldArxiv,greenewaldSPIE2014}, network anomaly detection \cite{greenewaldSSP2014}, synthetic aperture radar, and MEG/EEG covariance modeling (see \cite{tsiliArxiv} for references). In \cite{loan1992approximation} it was shown that any covariance matrix can be represented in this form with sufficiently large $r$. 
This allows for more accurate approximation of the covariance when it is not in Kronecker product form but most of its energy can be accounted for by a few Kronecker components. An algorithm (Permuted Rank-penalized Least Squares (PRLS)) for fitting the model \eqref{SumApprox} to a measured sample covariance matrix was introduced in \cite{tsiliArxiv} and was shown to have strong high dimensional MSE performance guarantees. It should also be noted that, as contrasted to standard PCA,  KronPCA accounts specifically for spatio-temporal structure, often provides a full rank covariance, and requires significantly fewer components (Kronecker factors) for equivalent covariance approximation accuracy. Naturally, since it compresses covariance onto a more complex (Kronecker) basis than PCA's singular vector basis, the analysis of Kron-PCA estimation performance is more complicated.

The standard Kronecker PCA model does not naturally accommodate additive noise since the diagonal elements (variances) must conform to the Kronecker structure of the matrix. To address this issue, in \cite{greenewaldArxiv} we extended this KronPCA model, and the PRLS algorithm of \cite{tsiliArxiv}, by adding a structured diagonal matrix to \eqref{SumApprox}. This model is called Diagonally Loaded Kronecker PCA (DL-KronPCA) and, although it has an additional $p_sp_t$ parameters, it was shown that for fixed $r$ it performs significantly better for inverse covariance estimation in cases where there is additive measurement noise \cite{greenewaldArxiv}.

The DL-KronPCA model \cite{greenewaldArxiv} is the $r+1$-Kronecker model
\begin{equation}
\label{Eq:DiagKron}
\mathbf{{\Sigma}} = \left(\sum\nolimits_{i = 1}^{r}\mathbf{A}_i \otimes \mathbf{B}_i\right) + \mathbf{U} = \mathbf{\Theta} + \mathbf{U},
\end{equation}
where the diagonal matrix $\mathbf{U}$ is called the ``diagonal loading matrix." Following Pitsianis-VanLoan rearrangement of the square $p_tp_s\times p_tp_s$ matrix $\mathbf{\Sigma}$ to an equivalent rectangular $p_s^2 \times p_t^2$ matrix  \cite{tsiliArxiv,werner2008estimation}, this becomes an equivalent matrix approximation problem of finding a low rank plus diagonal approximation \cite{greenewaldArxiv,tsiliArxiv}. The DL-KronPCA estimation problem was posed in \cite{greenewaldSSP2014,greenewaldArxiv} as the rearranged nuclear norm penalized Frobenius norm optimization 
\begin{equation}
\min_{\mathbf{\Sigma}}\|\mathbf{\Sigma}-\hat{\mathbf{\Sigma}}_{SCM}\|_F^2+\lambda \|\mathcal{R}(\mathbf{\Theta})\|_*
\end{equation}
where the minimization is over $\mathbf{\Sigma}$ of the form \eqref{Eq:DiagKron}, $\mathcal{R}(\cdot)$ is the Pitsianis-VanLoan rearrangement operator defined in the next section, and $\| \cdot \|_*$ is the nuclear norm. A weighted least squares solution to this problem is given in \cite{greenewaldArxiv,greenewaldSSP2014}.

This paper extends DL-KronPCA to the case where $\mathbf{U}$ in \eqref{Eq:DiagKron} is a sparse loading matrix that is not necessarily diagonal. In other words, we model the covariance as the sum of a low separation rank matrix $\mathbf{\Theta}$ and a sparse matrix $\mathbf{\Gamma}$:
\begin{equation}
\label{Eq:SparseModel}
{\mathbf{\Sigma}} = \left(\sum\nolimits_{i = 1}^{r}\mathbf{A}_i \otimes \mathbf{B}_i\right) + \mathbf{\Gamma} = \mathbf{\Theta} + \mathbf{\Gamma}.
\end{equation}
DL-KronPCA is obviously a special case of this model. The motivation behind the extension \eqref{Eq:SparseModel} is that while the KronPCA models \eqref{SumApprox} and \eqref{Eq:DiagKron} may provide a good fit to most entries in $\mathbf{\Sigma}$, there are sometimes a few variables (or correlations) that cannot be well modeled using KronPCA, due to complex non-Kronecker structured covariance patterns, e.g. sparsely correlated additive noise, sensor failure, or corruption. Thus, inclusion of a sparse term in \eqref{Eq:SparseModel} allows for a better fit with lower separation rank $r$, thus reducing the overall number of model parameters. 
In addition, if the underlying distribution is heavy tailed, sparse outliers in the sample covariance will occur, which will corrupt Kronecker product estimates \eqref{SumApprox} and \eqref{Eq:DiagKron} that don't have the flexibility of absorbing them into a sparse term. 
This notion of adding a sparse correction term to a regularized covariance estimate is found in the Robust PCA literature, where it is used to allow for more robust and parsimonious approximation to data matrices \cite{chandrasekaran2009,chandrasekaran2010latent,candes2011robust,yang2013dirty}. Robust KronPCA differs from Robust PCA in that it replaces the outer product with the Kronecker product. KronPCA and PCA are useful for significantly different applications because the Kronecker product allows the decomposition of spatio-temporal processes into (full rank) spatio-temporally \emph{separable} components, whereas PCA decomposes them into deterministic basis functions with no explicit spatio-temporal structure \cite{tsiliArxiv,greenewaldArxiv,werner2008estimation}. Sparse correction strategies have also been applied in the regression setting where the sparsity is applied to the first moments instead of the second moments \cite{5540138,otazo2014low}.


The model \eqref{Eq:SparseModel} is called the Robust Kronecker PCA (Robust KronPCA) model, and we propose regularized least squares based estimation algorithms for fitting the model. In particular, we propose a singular value thresholding (SVT) approach using the rearranged nuclear norm. However, unlike in robust PCA, the sparsity is applied to the Kronecker decomposition instead of the singular value decomposition. 
We derive high dimensional consistency results for the SVT-based algorithm that specify the MSE tradeoff between covariance dimension and the number of samples. Following \cite{greenewaldSSP2014}, we also allow for the enforcement of a temporal block Toeplitz constraint, which corresponds to a temporally stationary covariance and results in a further reduction in the number of parameters when the process under consideration is temporally stationary and the time samples are uniformly spaced. We illustrate our proposed robust Kronecker PCA method using simulated data and a yeast cell cycle dataset.

The rest of the paper is organized as follows: in Section \ref{Sec:RobustKronPCA}, we introduce our Robust KronPCA model and introduce an algorithm for estimating covariances described by it. Section \ref{Sec:Consistency} provides high dimensional convergence theorems. Simulations and an application to cell cycle data are presented in Section \ref{Sec:Results}, and our conclusions are given in Section \ref{Sec:Conclusion}.

\section{Robust KronPCA}
\label{Sec:RobustKronPCA}

Let $\mathbf{X}$ be a $p_s \times p_t$ matrix with entries $\tilde{x}(m,t)$ denoting samples of a space-time random process defined over a $p_s$-grid of spatial samples $m\in \{1,\ldots, p_s\}$ and a $p_t$-grid of time samples $t\in \{1,\ldots, p_t\}$. Let $\mathbf{x}={\mathrm{ vec}}({\mathbf{X}})$ denote the $p_tp_s$ column vector obtained by lexicographical reordering. Define the $p_tp_s \times p_tp_s$ spatiotemporal covariance matrix $ \mathbf{\Sigma}=\mathrm{Cov}[\mathbf{x}]$. 


Consider the model \eqref{Eq:SparseModel} for the covariance as the sum of a low separation rank matrix $\mathbf{\Theta}$ and a sparse matrix $\mathbf{\Gamma}$:
\begin{equation}
\label{Eq:KronSp}
{\mathbf{\Sigma}} = \mathbf{\Theta} + \mathbf{\Gamma}.
\end{equation}

Define $\mathbf{M}(i,j)$ as the $i,j$th $p_s \times p_s$ subblock of $\mathbf{M}$, i.e., $\mathbf{M}(i,j) = [M]_{(i-1)p_s+1:ip_s, (j-1)p_s + 1:jp_s}$. The invertible Pitsianis-VanLoan rearrangement operator $\mathcal{R}(\cdot)$ maps $p_tp_s\times p_tp_s$ matrices to $p_t^2 \times p_s^2$ matrices and, as defined in \cite{tsiliArxiv,werner2008estimation} sets the $(i-1)p_t + j$th row of $\mathcal{R}(\mathbf{M})$ equal to $\mathrm{vec}(\mathbf{M}(i,j))^T$, i.e. 
\begin{align}
\mathcal{R}(\mathbf{M}) &= [\begin{array}{ccc} \mathbf{m}_1 & \dots & \mathbf{m}_{p_t^2}\end{array}]^T,\\\nonumber
\mathbf{m}_{(i-1)p_t+j} &= \mathrm{vec}(\mathbf{M}(i,j)), \quad i,j = 1,\dots,p_t.
\end{align}

After Pitsianis-VanLoan rearrangement the expression \eqref{Eq:KronSp} takes the form
\begin{equation}
\label{Eq:lowrank}
\mathcal{R}(\mathbf{\Sigma}) = \sum\nolimits_{i=1}^r \mathbf{a}_i \mathbf{b}_i^T+\mathbf{S} = \mathbf{L} + \mathbf{S},
\end{equation}
where $\mathbf{a}_i = \mathrm{vec}(\mathbf{A}_i)$ and $\mathbf{b}_i = \mathrm{vec}(\mathbf{B}_i)$. In the next section we solve this Robust Kronecker PCA problem (low rank + sparse + noise) using sparse approximation, involving a nuclear and 1-norm penalized Frobenius norm loss on the rearranged fitting errors. 

\subsection{Estimation}
\label{Sec:Estimation}

\label{S:Kron}


Similarly to the approach of \cite{greenewaldArxiv,tsiliArxiv}, we fit the model \eqref{Eq:KronSp} to the sample covariance matrix $\hat{\mathbf{ \Sigma}}_{SCM}=n^{-1}\sum_{i=1}^n (\mathbf{x}_i-\overline{\mathbf{x}}) (\mathbf{x}_i-\overline{\mathbf{x}})^T$, where $\overline{\mathbf{x}}$ is the sample mean and $n$ is the number of samples of the space time process $ \mathbf{X}$. The best fit matrices $ \mathbf{A}_i$, $ \mathbf{B}_i$ and $\mathbf{\Gamma}$ in \eqref{Eq:KronSp} are determined by minimizing the objective function
\begin{equation}
\label{Eq:OptProbNEw}
\min_{\hat{\mathbf{\Theta}},\hat{\mathbf{\Gamma}}}\| \hat{\mathbf{\Sigma}}_{SCM}-\hat{ \mathbf{\Theta}}-\hat{\mathbf{\Gamma}}\|_F^2+\lambda_\Theta\|\mathcal{R}(\hat{\mathbf{\Theta}})\|_* + \lambda_\Gamma \|\hat{\mathbf{\Gamma}}\|_1.
\end{equation}
We call the norm $\|\mathcal{R}(\mathbf{\Theta})\|_*$ the rearranged nuclear norm of $\mathbf{\Theta}$. The regularization parameters $\lambda_\Theta$ and $\lambda_\Gamma$ control the importance of separation rank deficiency and sparsity, respectively, where increasing either increases the amount of regularization. The objective function \eqref{Eq:OptProbNEw} is equivalent to the rearranged objective function
\begin{equation}
\label{Eq:SparseOpt}
\min_{\hat{\mathbf{L}},\hat{\mathbf{S}}}\| \mathbf{R}-\hat{ \mathbf{L}}-\hat{\mathbf{S}}\|_F^2+\lambda_\Theta\|\hat{\mathbf{L}}\|_* + \lambda_\Gamma \|\hat{\mathbf{S}}\|_1,
\end{equation}
with $ \mathbf{R}=\mathcal{R}( \hat{\mathbf{\Sigma}}_{SCM})$. The objective function is minimized over all $p_t^2 \times p_s^2$ matrices $\hat{ \mathbf{R}} = \hat{\mathbf{L}} + \hat{\mathbf{S}}$. The solutions $\hat{\mathbf{L}}$ and $\hat{\mathbf{S}}$ correspond to estimates of $\mathcal{R}(\mathbf{\Theta})$ and $\mathcal{R}(\mathbf{\Gamma})$ respectively. 
As shown in \cite{greenewaldArxiv}, the left and right singular vectors of $\hat{\mathbf{L}}$ correspond to the (normalized) vectorized $\mathbf{A}_i$ and $\mathbf{B}_i$ respectively, as in \eqref{Eq:lowrank}.

This nuclear norm penalized low rank matrix approximation is a well-studied optimization problem \cite{mazumder2010spectral}, where it is shown to be strictly convex. Several fast solution methods are available, including the iterative SVD-based proximal gradient method on which Algorithm \ref{alg:SVTNon} is based \cite{mooreSSP2014}.
If the sparse correction $\mathbf{\Gamma}$ is omitted, equivalent to setting $\lambda_\Gamma = \infty$, the resulting optimization problem can be solved directly using the SVD \cite{tsiliArxiv}. The minimizers $\hat{\mathbf{L}}$, $\hat{\mathbf{S}}$ of \eqref{Eq:SparseOpt} are transformed to the covariance estimate by the simple operation
\begin{align}
\label{Eq:KronPCA}
\hat{\mathbf{\Sigma}} = \mathcal R^{-1}\left( \hat{\mathbf{L}}+\hat{\mathbf{S}} \right),
\end{align}
where $\mathcal{R}^{-1}(\cdot)$ is the inverse of the permutation operator $\mathcal{R}(\cdot)$. 
As the objective function in Equation \eqref{Eq:SparseOpt} is strictly convex and is equivalent to the Robust PCA objective function of \cite{mooreSSP2014}, Algorithm \ref{alg:SVTNon} converges to a unique global minimizer. 

Algorithm \ref{alg:SVTNon} is an appropriate modification of the iterative algorithm described in \cite{mooreSSP2014}. It consists of alternating between two simple steps: 1) soft thresholding of the singular values of the difference between the overall estimate and the estimate of the sparse part ($\mathbf{SVT}_\lambda(\cdot)$), and 2) soft thresholding of the entries of the difference between the overall estimate and the estimate of the low rank part ($\mathbf{soft}_\lambda(\cdot)$). The soft singular value thresholding operator is defined as  
\begin{align}
\mathbf{SVT}_\lambda(\mathbf{M}) = \mathbf{U}\left(\mathrm{diag}(\sigma_1,\dots,\sigma_{\min(m_1,m_2)}) - \lambda \mathbf{I}\right)_+ \mathbf{V}^T,
\end{align}
where $\mathbf{U}\mathrm{diag}(\sigma_1,\dots,\sigma_{\min(m_1,m_2)})\mathbf{V}^T$ is the singular value decomposition of $\mathbf{M} \in \mathbb{R}^{m_1\times m_2}$ and $(\cdot)_+ = \max (\cdot, 0)$. The entrywise soft thresholding operator is given by
\begin{align}
[\mathbf{soft}_\lambda(\mathbf{M})]_{ij} &= \mathrm{sign}({M}_{ij})(|{M}_{ij}| - \lambda)_+.
\end{align}


\begin{algorithm}[H]
\caption{Proximal Gradient Robust KronPCA}
\label{alg:SVTNon}
\begin{algorithmic}[1]

\STATE $\mathbf{R} =\mathcal{R}(\hat{\mathbf{\Sigma}}_{SCM})$
\STATE Initialize $\mathbf{M},\mathbf{S},\mathbf{L}$, choose step sizes $\tau_k$.
\WHILE{$\mathcal R^{-1}\left(  {\mathbf{L}}+{\mathbf{S}} \right)$ not converged}
\STATE $\mathbf{L}^k = \mathbf{SVT}_{\tau_k\lambda_\Theta'}(\mathbf{M}^{k-1}-\mathbf{S}^{k-1})$
\STATE $\mathbf{S}^k = \mathbf{soft}_{\tau_k\lambda_\Gamma'}(\mathbf{M}^{k-1}-\mathbf{L}^{k-1})$
\STATE $\mathbf{M}^k = \mathbf{L}^k + \mathbf{S}^k - \tau_k(\mathbf{L}^k + \mathbf{S}^k-\mathbf{R})$
\ENDWHILE
\STATE $\hat{\mathbf{\Sigma}} = \mathcal R^{-1}\left(  {\mathbf{L}}+{\mathbf{S}} \right)$
\RETURN $\hat{\mathbf{\Sigma}}$
\end{algorithmic}
\end{algorithm}

\begin{algorithm}[H]
\caption{Proximal Gradient Robust Toeplitz KronPCA}
\label{alg:SVT}
\begin{algorithmic}[1]

\STATE $\tilde{\mathbf{R}} = \mathbf{P}\mathcal{R}(\hat{\mathbf{\Sigma}}_{SCM})$
\STATE Initialize $\mathbf{M},\mathbf{S},\mathbf{L}$, choose step sizes $\tau_k$.
\WHILE{$\mathcal R^{-1}\left( \mathbf{P}^T( {\mathbf{L}}+{\mathbf{S}}) \right)$ not converged}
\STATE $\mathbf{L}^k = \mathbf{SVT}_{\tau_k\lambda_\Theta'}(\mathbf{M}^{k-1}-\mathbf{S}^{k-1})$
\FOR{ $j \in \mathcal{I}$}
\STATE $\mathbf{S}^k_{j+p_t} = \mathbf{soft}_{\tau_k\lambda_\Gamma'c_j}(\mathbf{M}^{k-1}_{j+p_t}-\mathbf{L}^{k-1}_{j+p_t})$
\ENDFOR
\STATE $\mathbf{M}^k = \mathbf{L}^k + \mathbf{S}^k - \tau_k(\mathbf{L}^k + \mathbf{S}^k-\tilde{\mathbf{R}})$
\ENDWHILE
\STATE $\hat{\mathbf{\Sigma}} = \mathcal R^{-1}\left(\mathbf{P}^{T} \left( {\mathbf{L}}+{\mathbf{S}} \right)\right)$
\RETURN $\hat{\mathbf{\Sigma}}$
\end{algorithmic}
\end{algorithm}

\subsection{Block Toeplitz Structured Covariance}
Here we extend Algorithm \ref{alg:SVTNon} to incorporate a block Toeplitz constraint. Block Toeplitz constraints are relevant to stationary processes arising in signal and image processing. For simplicity we consider the case that the covariance is block Toeplitz with respect to time, however, extensions to the cases of Toeplitz spatial structure and having Toeplitz structure simultaneously in both time and space are straightforward. The objective function \eqref{Eq:SparseOpt}, is to be solved with a constraint that both $\hat{\mathbf{\Theta}}$ and $\hat{\mathbf{\Gamma}}$ are temporally block Toeplitz.

For low separation rank component $\mathbf{\Theta} = \sum_{i=1}^r \mathbf{A}_i \otimes \mathbf{B}_i$, the block Toeplitz constraint is equivalent to a Toeplitz constraint on the temporal factors $\mathbf{A}_i$. The Toeplitz constraint on $\mathbf{A}_i$ is equivalent to \cite{kamm2000optimal,pitsianis1997kronecker}
\begin{equation}
\label{Eq:Toeplitz}
[\mathbf{a}_i]_{k} = v^{(i)}_{j + p_t},\: \forall k \in \mathcal{K}(j), j =-p_t+1,\dots, p_t-1,
\end{equation}
for some vector $\mathbf{v}^{(i)}$ where $\mathbf{a}_i = \mathrm{vec}(\mathbf{A}_i)$ and
\begin{equation}
\mathcal{K}(j) = \{k : \: (k-1)p_t + k + j \in [ -p_t+1,\: p_t-1]\}.
\end{equation}

It can be shown, after some algebra, that the optimization problem \eqref{Eq:SparseOpt} constrained to block Toeplitz covariances is equivalent to an unconstrained penalized least squares problem involving $\mathbf{v}^{(i)}$ instead of $\mathbf{a}_i$. Specifically, following the techniques of \cite{kamm2000optimal,pitsianis1997kronecker,greenewaldSSP2014} with the addition of the 1-norm penalty, the constrained optimization problem \eqref{Eq:SparseOpt} can be shown to be equivalent to the following unconstrained optimization problem:
\begin{align}
\label{Eq:OptProbToep}
\min_{{\tilde{\mathbf{L}}},\tilde{\mathbf{S}}} \|\tilde{\mathbf{R}} - {\tilde{\mathbf{L}}}-{\tilde{\mathbf{S}}} \|_F^2 + &\lambda_{\Theta}' \| \tilde{\mathbf{L}} \|_*+ \lambda_\Gamma' \sum_{j \in \mathcal{I}} c_j \| \tilde{\mathbf{S}}_{j+p_t} \|_1,
\end{align}
where $\tilde{\mathbf{S}}_j$ denotes the $j$th row of $\tilde{\mathbf{S}}$, $\tilde{\mathbf{L}} = \mathbf{P}\hat{\mathbf{L}}$, $\tilde{\mathbf{S}} = \mathbf{P}\mathbf{S}$, and $\tilde{\mathbf{R}}= \mathbf{P}\mathbf{R}$. The summation indices $\mathcal{I}$, the 1-norm weighting constants $c_j$, and the $(2p_t-1) \times p_t^2$ matrix $\mathbf{P}$ are defined as
\begin{align}
\label{Eq:ToepCoef}
\mathcal{I} =& \{-p_t+1,\dots,p_t-1\}\\\nonumber
c_j =& 1/\sqrt{p_t - |j|}\\\nonumber
P_{p_t+j,i} =& \left\{\begin{array}{ll}\frac{1}{\sqrt{p_t - |j|}}& i \in \mathcal{K}(j)\\ 0&o.w.\end{array}\right. 
\end{align}
where the last line holds for all $j = -p_t + 1,\dots, p_t-1, i =1,\dots, p_s^2$. 
Note that this imposition of Toeplitz structure also results in a significant reduction in computational cost primarily due to a reduction in the size of the matrix in the singular value thresholding step \cite{greenewaldSSP2014}. 
The block Toeplitz estimate is given by
\begin{align}
\label{Eq:KronPCAToep}
\hat{\mathbf{\Sigma}} = \mathcal R^{-1}\left(\mathbf{P}^{T}\left(\tilde{\mathbf{L}}+\tilde{\mathbf{S}}\right) \right),
\end{align}
where $\tilde{\mathbf{L}}$, $\tilde{\mathbf{S}}$ are the minimizers of \eqref{Eq:OptProbToep}.
Similarly to the non-Toeplitz case, the block Toeplitz estimate can be computed using Algorithm \ref{alg:SVT}, which is the appropriate modification of Algorithm \ref{alg:SVTNon}. 
As the objective function in Equation \eqref{Eq:SparseOpt} is strictly convex and is equivalent to the Robust PCA objective function of \cite{mooreSSP2014}, Algorithm \ref{alg:SVT} converges to a unique global minimizer. 


The non-Toeplitz and Toeplitz objective functions \eqref{Eq:SparseOpt} and \eqref{Eq:OptProbToep}, respectively, are both invariant with respect to replacing $\mathbf{\Theta}$ with $\mathbf{\Theta}^T$ and $\mathbf{\Gamma}$ with $\mathbf{\Gamma}^T$ because $\mathbf{\Sigma}_{SCM}$ is symmetric. Furthermore, $\|(\mathbf{M} + \mathbf{M}^T)/2\| \leq \frac{1}{2}(\|\mathbf{M}\|+ \|\mathbf{M}^T\|) = \|\mathbf{M}\|$ (for both the weighted 1-norm and nuclear norm) by the triangle inequality. Hence the symmetric $(\mathbf{\Sigma} + \mathbf{\Sigma}^T)/2 = (\mathbf{\Theta} + \mathbf{\Theta}^T)/2 + (\mathbf{\Gamma} + \mathbf{\Gamma}^T)/2$ will always result in at least as low an objective function value as would $\mathbf{\Sigma}$. 
By the uniqueness of the global optimum, 
the Robust KronPCA covariance estimates $\hat{\mathbf{\Theta}}$ and $\hat{\mathbf{\Gamma}}$ are therefore symmetric for both the Toeplitz and non-Toeplitz cases.

\section{High Dimensional Consistency}
\label{Sec:Consistency}
In this section, we impose the additional assumption that the training data $\mathbf{x}_i$ is Gaussian with true covariance given by 
\begin{equation}
\mathbf{\Sigma}_0 = \mathbf{\Theta}_0 + \mathbf{\Gamma}_0,
\end{equation}
where $\mathbf{\Theta}_0$ is the low separation rank covariance of interest and $\mathbf{\Gamma}_0$ is sparse. 

A norm $\mathcal{R}_{k}(\cdot)$ is said to be \emph{decomposable} with respect to subspaces ($\mathcal{M}_k, \bar{\mathcal{M}}_k$) if \cite{yang2013dirty}
\begin{equation}
\mathcal{R}_k (u + v) = \mathcal{R}_k(u) + \mathcal{R}_k (v), \quad \forall u\in \mathcal{M}_k, v \in \bar{\mathcal{M}}_k^\perp.
\end{equation}

We define subspace pairs ${\mathcal{M}}_\mathbf{Q},\bar{\mathcal{M}}_\mathbf{Q}$ \cite{yang2013dirty} with respect to which the rearranged nuclear ($\mathbf{Q} = \mathbf{\Theta}$) and 1-norms ($\mathbf{Q} = \mathbf{\Gamma}$) are respectively decomposable \cite{yang2013dirty}. These are associated with the set of either low separation rank ($\mathbf{Q} = \mathbf{\Theta}$) or sparse ($\mathbf{Q} = \mathbf{\Gamma}$) matrices. For the sparse case, let $S$ be the set of indices on which $\mathrm{vec}\{\mathbf{\Gamma}\}$ is nonzero. Then $\mathcal{M}_\Gamma = \bar{\mathcal{M}}_\Gamma$ is the subspace of vectors in $\mathbb{R}^{p_t^2p_s^2}$ that have support contained in $S$, and $\bar{\mathcal{M}}_\Gamma^\perp$ is the subspace of vectors orthogonal to $\bar{\mathcal{M}}_\Gamma^\perp$, i.e. the subspace of vectors with support contained in $S^c$. 

For the rearranged nuclear norm, note that by \cite{loan1992approximation} any $pq \times pq$ matrix $\mathbf{\Theta}$ can be decomposed as
\begin{align}
\label{Eq:Decomp}
  \mathbf{\Theta} = \sum_{i=1}^{\min(p_t^2,p_s^2)} \sigma_i \mathbf{A}_{\Theta}^{(i)}\otimes \mathbf{B}_{\Theta}^{(i)}
\end{align}
where for all $i$, $\|\mathbf{A}_{\Theta}^{(i)}\|_F = \|\mathbf{B}_{\Theta}^{(i)}\|_F = 1$, $\sigma_i \geq 0$ and nonincreasing,
the $p_t\times p_t$ $\{\mathbf{A}_{\Theta}^{(i)}\}_i$ are all linearly independent, and the
$p_s \times p_s$ $\{\mathbf{B}_{\Theta}^{(i)}\}_i$ are all linearly independent.
It is easy to show that this decomposition can be computed by extracting and rearranging the singular value decomposition of
$\mathcal{R}(\mathbf{\Theta})$ \cite{loan1992approximation,tsiliArxiv,werner2008estimation} and thus the $\sigma_i$ are uniquely determined. Let $r$ be such that $\sigma_i = 0$
for all $i > r$. Define the matrices 
\begin{align}
{\mathbf{U}}_{A} &=[\mathrm{vec}(\mathbf{A}_{\Theta}^{(1)}), \dots, \mathrm{vec}(\mathbf{A}_{\Theta}^{(r)})], \nonumber\\\nonumber
{\mathbf{U}}_{B}&=[\mathrm{vec}(\mathbf{B}_{\Theta}^{(1)}), \dots,
\mathrm{vec}(\mathbf{B}_{\Theta}^{(r)})].
\end{align}
Then we define a pair of subspaces with respect to which the nuclear norm is decomposable as 
\begin{align}
\mathcal{M}_\Theta =& \mathrm{range}({\mathbf{U}_A}\otimes {\mathbf{U}_B}),\\\nonumber
\bar{\mathcal{M}}^\perp_\Theta =& \mathrm{range}({\mathbf{U}_A^\perp}\otimes {\mathbf{U}_B^\perp}).
\end{align} 
It can be shown that these subspaces are uniquely determined by $\mathbf{\Theta}$.

Consider the covariance estimator that results from solving the optimization problem in Equation \eqref{Eq:SparseOpt}.
As is typical in Robust PCA, an incoherence assumption is required to ensure that $\mathbf{\Theta}$ and $\mathbf{\Gamma}$ are distinguisable.
Our incoherence assumption is as follows:
\begin{align}
\label{Eq:INCOH}
&\max\left\{\sigma_{max}\left(\mathcal{P}_{\bar{\mathcal{M}}_{\mathbf{\Theta}}}\mathcal{P}_{\bar{\mathcal{M}}_{\mathbf{\Gamma}}}\right),\sigma_{max}\left(\mathcal{P}_{\bar{\mathcal{M}}_{\mathbf{\Theta}}^\bot}\mathcal{P}_{\bar{\mathcal{M}}_{\mathbf{\Gamma}}}\right),\right.\\\nonumber
&\left.\sigma_{max}\left(\mathcal{P}_{\bar{\mathcal{M}}_{\mathbf{\Theta}}}\mathcal{P}_{\bar{\mathcal{M}}_{\mathbf{\Gamma}}^\bot}\right),\sigma_{max}\left(\mathcal{P}_{\bar{\mathcal{M}}_{\mathbf{\Theta}}^\bot}\mathcal{P}_{\bar{\mathcal{M}}_{\mathbf{\Gamma}}^\bot}\right)\right\} \leq \frac{16}{\Lambda^2}
\end{align}
where
\begin{equation}
\Lambda = 2 + \max\left\{\frac{3\beta\sqrt{2r}}{\lambda\sqrt{s}},\frac{3\lambda\sqrt{s}}{\beta\sqrt{2r}}\right\},
\end{equation}
$\mathcal{P}_{\bar{\mathcal{M}}_\mathbf{Q}}$ is the matrix corresponding to the projection operator that projects onto the subspace $\bar{\mathcal{M}}_\mathbf{Q}$ and $\sigma_{max}$ denotes the maximum singular value.

By way of interpretation, note that the maximum singular value of the product of projection matrices measures the ``angle" between the subspaces. Hence, the incoherence condition is imposing that the
subspaces in which $\mathbf{\Theta}$ and $\mathbf{\Gamma}$ live be sufficiently ``orthogonal'' to each other i.e., ``incoherent.'' This ensures identifiability in the sense that a portion of $\mathbf{\Gamma}$ (a portion of $\mathbf{\Theta}$) cannot be well approximated by adding a small number of additional terms to the Kronecker factors of $\mathbf{\Theta}$ ($\mathbf{\Gamma}$). Thus $\mathbf{\Theta}$ cannot be sparse and $\mathbf{\Gamma}$ cannot have low separation rank. In \cite{yang2013dirty} it was noted that this incoherency condition is significantly weaker than other typically imposed approximate orthogonality conditions.

Suppose that in the robust KronPCA model the $n$ training samples are multivariate Gaussian distributed and IID, that
$\mathbf{L}_0 = \mathcal{R}(\mathbf{\Theta}_0)$ is at most rank $r$, and that $\mathbf{S}_0 = \mathcal{R}(\mathbf{\Theta}_0)$ has $s$ nonzero entries \eqref{Eq:lowrank}.
Choose the regularization parameters to be
\begin{equation}
\label{Eq:beta}
\lambda_\Theta = k \|\mathbf{\Sigma}_0\| \max(\alpha^2,\alpha),\quad \lambda_\Gamma = 32\rho(\mathbf{\Sigma}_0)\sqrt{\frac{\log p_tp_s}{n}},
\end{equation}
where $\rho(\mathbf{\Sigma}) = \max_j \Sigma_{jj}$ and $k$ is smaller than an increasing function of $t_0$ given in the proof (\eqref{Eq:k}). We define $\alpha$ below. 

Given these assumptions, we have the following bound on the estimation error (defining $M = \max(p_t,p_s,n)$).
\begin{theorem}[Robust KronPCA]
\label{Thm:HDC}
Let $\mathbf{L}_0 = \mathcal{R}(\mathbf{\Theta}_0)$. Assume that the incoherence assumption \eqref{Eq:INCOH} and the assumptions in the previous paragraph hold, and the regularization parameters $\lambda_\Theta$ and $\lambda_\Gamma$ are given by Equation \eqref{Eq:beta} with $\alpha = \sqrt{t_0(p_t^2 + p_s^2 + \log M)/n}$ for any chosen $t_0>1$. Then the Frobenius norm error of the solution to the optimization problem \eqref{Eq:SparseOpt} is bounded as:
\begin{align}
\label{Eq:27}
\|\hat{\mathbf{L}} - &\mathbf{L}_0\|_F \leq\\\nonumber&6\max \left\{k\|\mathbf{\Sigma_0}\| \sqrt{r}\max(\alpha^2,\alpha), 32 \rho(\mathbf{\Sigma_0})\sqrt{\frac{s\log p_tp_s}{n}} \right\}
\end{align}
with probability at least $1-c\exp(-c_0 \log p_tp_s)$, where $c,c_0$ are constants, and $c_0$ is dependent on $t_0$ but is bounded from above by an absolute constant. 
\end{theorem}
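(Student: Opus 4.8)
The plan is to recognize the estimator \eqref{Eq:SparseOpt} as a regularized $M$-estimator with a \emph{superposition} (``dirty'') structure: the squared Frobenius loss is penalized by two decomposable regularizers, the nuclear norm acting on the low separation-rank part $\hat{\mathbf{L}}$ and the entrywise $1$-norm acting on the sparse part $\hat{\mathbf{S}}$. I would therefore invoke the general deterministic error bound for superposition-structured estimators of \cite{yang2013dirty}, which yields a bound of the form $\|\hat{\mathbf{L}}-\mathbf{L}_0\|_F \le C\max\{\lambda_\Theta\,\Psi(\mathcal{M}_\Theta),\,\lambda_\Gamma\,\Psi(\mathcal{M}_\Gamma)\}$, where $\Psi(\mathcal{M}_\Theta)=\sqrt{2r}$ and $\Psi(\mathcal{M}_\Gamma)=\sqrt{s}$ are the subspace compatibility constants (from $\|\mathbf{X}\|_*\le\sqrt{2r}\,\|\mathbf{X}\|_F$ on rank-$2r$ matrices and $\|\mathbf{x}\|_1\le\sqrt{s}\,\|\mathbf{x}\|_2$ on $s$-sparse vectors, the same constants entering $\Lambda$ in \eqref{Eq:INCOH}). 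The framework requires three ingredients: (i) restricted strong convexity of the loss, (ii) regularization parameters that dominate the dual norms of the loss gradient at the truth $(\mathbf{L}_0,\mathbf{S}_0)$, and (iii) the structural incoherence condition \eqref{Eq:INCOH}. Ingredient (iii) is assumed. Since the loss is quadratic with Hessian equal to the identity in the combined variable $\hat{\mathbf{L}}+\hat{\mathbf{S}}$, ingredient (i) holds trivially with unit curvature, and it is precisely the incoherence \eqref{Eq:INCOH} that upgrades curvature in the sum to separate control of each component. The entire probabilistic content thus reduces to verifying (ii).

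Writing $\mathbf{R}=\mathbf{L}_0+\mathbf{S}_0+\mathbf{W}$ with noise $\mathbf{W}=\mathcal{R}(\hat{\mathbf{\Sigma}}_{SCM}-\mathbf{\Sigma}_0)$, the gradient of the loss at the truth is proportional to $\mathbf{W}$, so (ii) amounts to the two tail requirements $\lambda_\Theta \ge 2\|\mathbf{W}\|_2$ and $\lambda_\Gamma \ge 2\|\mathbf{W}\|_\infty$, the spectral and entrywise-max norms being dual to the nuclear and $1$-norms respectively. The entrywise bound is routine: each entry of $\mathbf{W}$ equals a centered sample-covariance deviation $\hat{\Sigma}_{ij}-\Sigma_{0,ij}$, which for jointly Gaussian data is sub-exponential with scale $\sqrt{\Sigma_{0,ii}\Sigma_{0,jj}}$; a Bernstein inequality with a union bound over the $O((p_tp_s)^2)$ entries gives $\|\mathbf{W}\|_\infty \le c\,\rho(\mathbf{\Sigma}_0)\sqrt{\log(p_tp_s)/n}$ on a high-probability event, which matches the choice of $\lambda_\Gamma$ in \eqref{Eq:beta}.

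The delicate part is controlling $\|\mathbf{W}\|_2$, the largest singular value of the \emph{rearranged} Wishart fluctuation. Because $\mathcal{R}(\cdot)$ is a Frobenius isometry but not a spectral isometry, and because it scrambles the entries of $\hat{\mathbf{\Sigma}}_{SCM}-\mathbf{\Sigma}_0$ into a $p_t^2\times p_s^2$ matrix whose rows and columns are neither independent nor Wishart distributed, standard random-matrix eigenvalue bounds do not apply directly. Here I would invoke the spectral-norm concentration inequality for $\mathcal{R}(\hat{\mathbf{\Sigma}}_{SCM}-\mathbf{\Sigma}_0)$ developed in the PRLS analysis of \cite{tsiliArxiv}, which through an $\epsilon$-net and Gaussian-chaos argument gives $\|\mathbf{W}\|_2 \le c'\,\|\mathbf{\Sigma}_0\|\max(\alpha^2,\alpha)$ with $\alpha=\sqrt{t_0(p_t^2+p_s^2+\log M)/n}$, the two regimes $\alpha^2$ versus $\alpha$ reflecting the quadratic-then-linear behavior of Wishart deviations. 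Choosing $k$ in \eqref{Eq:beta} large enough relative to the universal constant $c'$ (its admissible range as a function of $t_0$ being recorded in the proof) then secures $\lambda_\Theta\ge 2\|\mathbf{W}\|_2$. I expect this spectral-norm concentration of the rearranged sample covariance to be the main obstacle.

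Finally I would intersect the two high-probability events; a union bound preserves a failure probability of order $\exp(-c_0\log p_tp_s)$, so that with probability at least $1-c\exp(-c_0\log p_tp_s)$ both regularization conditions \eqref{Eq:beta} hold simultaneously. Applying the deterministic superposition bound of \cite{yang2013dirty} with $\Psi(\mathcal{M}_\Theta)=\sqrt{2r}$, $\Psi(\mathcal{M}_\Gamma)=\sqrt{s}$, and the explicit $\lambda_\Theta,\lambda_\Gamma$ of \eqref{Eq:beta} then reproduces exactly the claimed bound \eqref{Eq:27}, with the factor $6$ and the outer maximum inherited directly from the framework's constants.
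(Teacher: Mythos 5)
Your proposal follows essentially the same route as the paper's proof: it invokes the superposition-structured (dirty-model) M-estimation framework of \cite{yang2013dirty} with subspace compatibility constants $\sqrt{2r}$ and $\sqrt{s}$, verifies restricted strong convexity trivially for the quadratic loss, reduces the probabilistic work to the dual-norm conditions $\lambda_\Gamma \gtrsim \|\mathbf{W}\|_\infty$ (entrywise sub-exponential concentration plus a union bound) and $\lambda_\Theta \gtrsim \|\mathcal{R}(\hat{\mathbf{\Sigma}}_{SCM}-\mathbf{\Sigma}_0)\|_2$ (the $\epsilon$-net Gaussian-chaos bound from \cite{tsiliArxiv}), and combines the events by a union bound. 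This matches the paper's argument in all essential respects.
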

The proof of this theorem can be found in Appendix \ref{App:A}. Note that in practice the regularization parameters will be chosen via a method such as cross validation, so given that the parameters in \eqref{Eq:beta} depend on $\mathbf{\Sigma}_0$, the specific values in \eqref{Eq:beta} are less important than how they scale with $p_t,p_s,n$.

Next, we derive a similar bound for the Toeplitz Robust KronPCA estimator.

It is easy to show that a decomposition of $\mathbf{\Theta}$ of the form \eqref{Eq:Decomp} exists where all the $\mathbf{A}_i$ are Toeplitz. Hence the definitions of the relevant subspaces ($\bar{\mathcal{M}}_{\mathbf{\Gamma}}$, $\bar{\mathcal{M}}_{\mathbf{\Theta}}$, ${\mathcal{M}}_{\mathbf{\Gamma}}$, ${\mathcal{M}}_{\mathbf{\Theta}}$) are of the same form as for the non Toeplitz case. In the Gaussian robust Toeplitz KronPCA model \eqref{Eq:OptProbToep}, further suppose $\tilde{\mathbf{L}}_0 = \mathbf{P}\mathbf{L}_0$ is at most rank $r$ and $\tilde{\mathbf{S}}_0 = \mathbf{P}\mathbf{S}_0$ has at most $s$ nonzero entries.

\begin{theorem}[Toeplitz Robust KronPCA]
\label{Thm:HDCToep}
Assume that the assumptions of Theorem III.1 hold and that $\mathbf{P L}_0$ is at most rank $r$ and that $\mathbf{P S}_0$ has at most $s$ non-zero entries. Let the regularization parameters $\lambda_\Theta$ and $\lambda_\Gamma$ be as in \eqref{Eq:beta} with $\alpha = \sqrt{t_0(2p_t + p_s^2 + \log M)/n}$ for any $t_0 > 1$. Then the Frobenius norm error of the solution to the Toeplitz Robust KronPCA optimization problem \eqref{Eq:SparseOpt} with coefficients given in \eqref{Eq:ToepCoef} is bounded as:
\begin{align}
\label{Eq:28}
\|\hat{\mathbf{L}} - &\mathbf{L}_0\|_F \leq\\\nonumber&6\max \left\{k\|\mathbf{\Sigma}_0\| \sqrt{r}\max(\alpha^2,\alpha), 32 \rho(\mathbf{\Sigma}_0)\sqrt{\frac{s\log p_tp_s}{n}} \right\}
\end{align}
with probability at least $1-c\exp(-c_0 \log p_tp_s)$, where $c,c_0$ are constants, and $c_0$ is dependent on $t_0$ but is bounded from above by an absolute constant.
\end{theorem}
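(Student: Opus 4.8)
The plan is to reduce Theorem~\ref{Thm:HDCToep} to Theorem~\ref{Thm:HDC} by recognizing that the Toeplitz-constrained estimator \eqref{Eq:OptProbToep} is itself an instance of the same nuclear-norm-plus-weighted-$\ell_1$ (``dirty model'') estimator, now applied to the transformed observation $\tilde{\mathbf{R}} = \mathbf{P}\mathcal{R}(\hat{\mathbf{\Sigma}}_{SCM})$ with estimand $\tilde{\mathbf{R}}_0 = \mathbf{P}\mathcal{R}(\mathbf{\Sigma}_0) = \tilde{\mathbf{L}}_0 + \tilde{\mathbf{S}}_0$. The loss is quadratic, so restricted strong convexity holds with constant $1$; the nuclear norm is decomposable with respect to subspaces $(\mathcal{M}_{\mathbf{\Theta}},\bar{\mathcal{M}}_{\mathbf{\Theta}})$ built from the decomposition \eqref{Eq:Decomp} with Toeplitz $\mathbf{A}_i$, and the weighted penalty $\sum_j c_j\|\tilde{\mathbf{S}}_{j+p_t}\|_1$ is decomposable with respect to the support subspaces of $\tilde{\mathbf{S}}_0$. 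Thus the same master inequality underlying Theorem~\ref{Thm:HDC} applies once the regularization parameters dominate the dual norms of the transformed noise $\tilde{\mathbf{N}} = \mathbf{P}\mathcal{R}(\hat{\mathbf{\Sigma}}_{SCM} - \mathbf{\Sigma}_0)$, with the incoherence condition \eqref{Eq:INCOH} controlling the cross terms between the two regularizers exactly as before. The entire argument therefore reduces to re-deriving two deviation bounds for $\tilde{\mathbf{N}}$.

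The key structural fact I would exploit is that, by the choice $c_j = 1/\sqrt{p_t-|j|}$ in \eqref{Eq:ToepCoef} and the disjointness of the supports $\mathcal{K}(j)$, each row of $\mathbf{P}$ has unit $\ell_2$ norm and the rows are mutually orthogonal, so $\mathbf{P}\mathbf{P}^T = \mathbf{I}_{2p_t-1}$; that is, $\mathbf{P}$ is a partial isometry whose row space is the Toeplitz subspace. For the spectral-norm (nuclear dual) bound I would write $\|\tilde{\mathbf{N}}\|_2 = \sup_{\|\mathbf{u}\|_2 = \|\mathbf{w}\|_2 = 1}\langle \mathcal{R}(\hat{\mathbf{\Sigma}}_{SCM}-\mathbf{\Sigma}_0),\,(\mathbf{P}^T\mathbf{u})\mathbf{w}^T\rangle = \sup_{\mathbf{u},\mathbf{w}}\langle \hat{\mathbf{\Sigma}}_{SCM}-\mathbf{\Sigma}_0,\,\mathcal{R}^{-1}((\mathbf{P}^T\mathbf{u})\mathbf{w}^T)\rangle$, where the test matrix has unit Frobenius norm since $\|\mathbf{P}^T\mathbf{u}\|_2 = \|\mathbf{u}\|_2$, and crucially $\mathbf{u}$ now ranges over the $(2p_t-1)$-sphere instead of the $p_t^2$-sphere. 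An $\epsilon$-net over $(\mathbf{u},\mathbf{w})$ combined with the sub-exponential (Wishart) tail bound for each fixed quadratic form $\langle \hat{\mathbf{\Sigma}}_{SCM}-\mathbf{\Sigma}_0,\,\mathbf{G}\rangle$ then yields $\|\tilde{\mathbf{N}}\|_2 \le C\|\mathbf{\Sigma}_0\|\max(\alpha,\alpha^2)$ for an absolute constant $C$, now with the reduced complexity $\alpha^2 = t_0(2p_t + p_s^2 + \log M)/n$. This is exactly the Toeplitz exponent and justifies the stated $\lambda_\Theta$.

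For the weighted-$\ell_1$ (sparse dual) bound, the transformed entry $[\tilde{\mathbf{N}}]_{p_t+j,i} = c_j\sum_{k\in\mathcal{K}(j)}[\mathcal{R}(\hat{\mathbf{\Sigma}}_{SCM}-\mathbf{\Sigma}_0)]_{k,i}$ is a $\sqrt{p_t-|j|}$-normalized average of covariance deviations along a single temporal diagonal, hence remains a variance-$O(\rho(\mathbf{\Sigma}_0)^2/n)$ sub-exponential quantity; a union bound over the at most $(p_tp_s)^2$ entries gives $\max_{j,i}|[\tilde{\mathbf{N}}]_{p_t+j,i}| \le C'\rho(\mathbf{\Sigma}_0)\sqrt{\log(p_tp_s)/n}$ on the stated high-probability event. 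Because the same weights $c_j$ simultaneously rescale $\tilde{\mathbf{S}}_0 = \mathbf{P}\mathbf{S}_0$, the product of $\lambda_\Gamma$ with the subspace-compatibility constant of the weighted $\ell_1$ norm on the support of $\tilde{\mathbf{S}}_0$ is invariant under the averaging, so the sparse contribution to the error retains the dimension-free rate $32\rho(\mathbf{\Sigma}_0)\sqrt{s\log(p_tp_s)/n}$. Finally, since $\mathbf{P}$ is an isometry on the Toeplitz subspace containing $\hat{\mathbf{L}}$ and $\mathbf{L}_0$, we have $\|\tilde{\mathbf{L}} - \tilde{\mathbf{L}}_0\|_F = \|\hat{\mathbf{L}} - \mathbf{L}_0\|_F$, so the master inequality delivers the bound \eqref{Eq:28} verbatim.

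I expect the main obstacle to be the spectral-norm step: converting the partial-isometry structure of $\mathbf{P}$ into a genuine reduction of the covering-number exponent from $p_t^2$ to $2p_t$, while keeping the sub-exponential tail bounds valid for the Wishart-type (rather than i.i.d.-entry) noise arising from the sample covariance. Coupled with this is the need to verify that the $c_j$-weighting makes the regularization-times-compatibility product in the sparse term dimension-free, so that the Toeplitz savings appear only in $\alpha$. Once these two deviation bounds are in hand, the remainder is a transcription of the Theorem~\ref{Thm:HDC} argument with the reduced dimension substituted throughout.
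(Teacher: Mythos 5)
Your proposal follows essentially the same route as the paper's proof: the same decomposable-regularizer master theorem applied to the transformed data $\tilde{\mathbf{R}}=\mathbf{P}\mathcal{R}(\hat{\mathbf{\Sigma}}_{SCM})$, the same key observation that the rows of $\mathbf{P}$ are orthonormal (so $\mathbf{x}^T\mathbf{P}$ lies on the unit sphere of $\mathbb{R}^{p_t^2}$ whenever $\mathbf{x}\in\mathcal{S}_{2p_t-2}$), the same $\epsilon$-net over $\mathcal{S}_{2p_t-2}\times\mathcal{S}_{p_s^2-1}$ combined with the Wishart quadratic-form tail bound to cut the covering exponent from $p_t^2+p_s^2$ to $2p_t+p_s^2$ (this is precisely Corollary \ref{Cor:ToepNorm} and its proof in Appendix \ref{App:B}), and the same transfer back through $\mathbf{L}=\mathbf{P}^T\tilde{\mathbf{L}}$.

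The one place you diverge is the sparse dual-norm step, and there your argument has a gap. The dual of $\mathcal{R}_\Gamma(\mathbf{X})=\sum_j c_j\|\mathbf{X}_{j+p_t}\|_1$ is $\max_{j,i}\,c_j^{-1}|Z_{j+p_t,i}|$, so the quantity $\lambda_\Gamma$ must dominate is the \emph{unnormalized} diagonal sum $\max_{j,i}\bigl|\sum_{k\in\mathcal{K}(j)}[\mathcal{R}(\mathbf{W})]_{k,i}\bigr|$, not the $c_j$-weighted entry $[\tilde{\mathbf{N}}]_{p_t+j,i}$ that you bound. Your ``invariance'' claim --- that the $c_j$ in the penalty cancels the $c_j^{-1}$ in the dual norm once the compatibility constant is accounted for --- does not correspond to any mechanism in the master theorem: the error is controlled by $\max_k\lambda_k\Psi_k(\bar{\mathcal{M}}_k)$ subject to $\lambda_k\geq 2\mathcal{R}_k^*(\nabla\mathcal{L})$, and these two factors do not cancel. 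Separately, your assertion that the $\sqrt{p_t-|j|}$-normalized sum of covariance deviations along a temporal diagonal has variance $O(\rho(\mathbf{\Sigma}_0)^2/n)$ implicitly treats the summands as independent; the entries of $\hat{\mathbf{\Sigma}}_{SCM}-\mathbf{\Sigma}_0$ along a diagonal are correlated, so without further argument one only gets the crude bound $|\sum_{k\in\mathcal{K}(j)}W_{k,i}|\leq(p_t-|j|)\|\mathbf{W}\|_\infty$, which is what the paper uses and which carries an extra factor of $p_t$. (To be fair, the paper then declares the $p_t$-free choice of $\lambda_\Gamma$ ``satisfactory'' without reconciling that factor, so this step is loose in the paper as well.) Aside from this point, your proposal reproduces the paper's argument.
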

The proof of this theorem is given in Appendix \ref{App:A}.

Comparing the right hand sides of \eqref{Eq:27} and \eqref{Eq:28} in Theorems III.1 and III.2 we see that, as expected, prior knowledge of Toeplitz structure reduces the Frobenius norm error of the estimator $\hat{\mathbf{L}}$ from $O(p_t^2)$ to $O(p_t)$.

\section{Results}
\label{Sec:Results}
\subsection{Simulations}
In this section we evaluate the performance of the proposed robust Kronecker PCA algorithms by conducting mean squared covariance estimation error simulations. For the first simulation, we consider a covariance that is a sum of 3 Kronecker products ($p_t = 10,\: p_s = 50$), with each term being a Kronecker product of two autoregressive (AR) covariances. AR processes with AR parameter $a$ have covariances $\mathbf{\Psi}$ given by
\begin{equation}
\psi_{ij} = c a^{|i-j|}.
\end{equation}
For the $p \times p$ temporal factors $\mathbf{A}_i$, we use AR parameters $[0.5, 0.8, 0.05]$ and for the $q \times q$ spatial factors $\mathbf{B}_i$ we use $[0.95, 0.35, 0.999]$. The Kronecker terms are scaled by the constants $[1,0.5,0.3]$. These values were chosen to create a complex covariance with 3 strong Kronecker terms with widely differing structure. The result is shown in Figure \ref{Fig:Cov}. We ran the experiments below for 100 cases with randomized AR parameters and in every instance Robust KronPCA dominated standard KronPCA and the sample covariance estimators to an extent qualitatively identical to the case shown below.

To create a covariance matrix following the non-Toeplitz ``KronPCA plus sparse" model, we create a new ``corrupted" covariance by taking the 3 term AR covariance and deleting a random set of row/column pairs, adding a diagonal term, and sparsely adding high correlations (whose magnitude depends on the distance to the diagonal) at random locations. Figure \ref{Fig:CovCor} shows the resulting corrupted covariance. To create a ``corrupted" block Toeplitz ``KronPCA plus sparse" covariance, a diagonal term and block Toeplitz sparse correlations were added to the AR covariance in the same manner as in the non-Toeplitz case. 

Figures \ref{Fig:MSEToep} and \ref{Fig:MSENT} show results for estimating the Toeplitz corrupted covariance, and non-Toeplitz corrupted covariance respectively, using Algorithms 1 and 2.  For both simulations, the average MSE of the covariance estimate is computed for a range of Gaussian training sample sizes. Average 3-steps ahead prediction MSE loss using the learned covariance to form the predictor coefficients ($\hat{\mathbf{\Sigma}}_{yx}\hat{\mathbf{\Sigma}}_{xx}^{\dag}$) is shown in Figure \ref{Fig:MSEP}. MSE loss is the prediction MSE using the estimated covariance minus $E[(y-E[y|x])^2]$, which is the prediction MSE achieved using an infinite number of training samples. The regularization parameter values shown are those used at $n=10^5$ samples, the values for lower sample sizes are set proportionally using the $n$-dependent formulas given by \eqref{Eq:beta} and Theorem \ref{Thm:HDC}. The chosen values of the regularization parameters are those that achieved best average performance in the appropriate region. Note the significant gains achieved using the proposed regularization, and the effectiveness of using the regularization parameter formulas derived in the theorems. In addition, note that separation rank penalization alone does not maintain the same degree of performance improvement over the unregularized (SCM) estimate in the high sample regime, whereas the full Robust KronPCA method maintains a consistent advantage (as predicted by the Theorems \ref{Thm:HDC} and \ref{Thm:HDCToep}).

%
%
%



\begin{figure}[htb]
\centering
\includegraphics[width=3.6in]{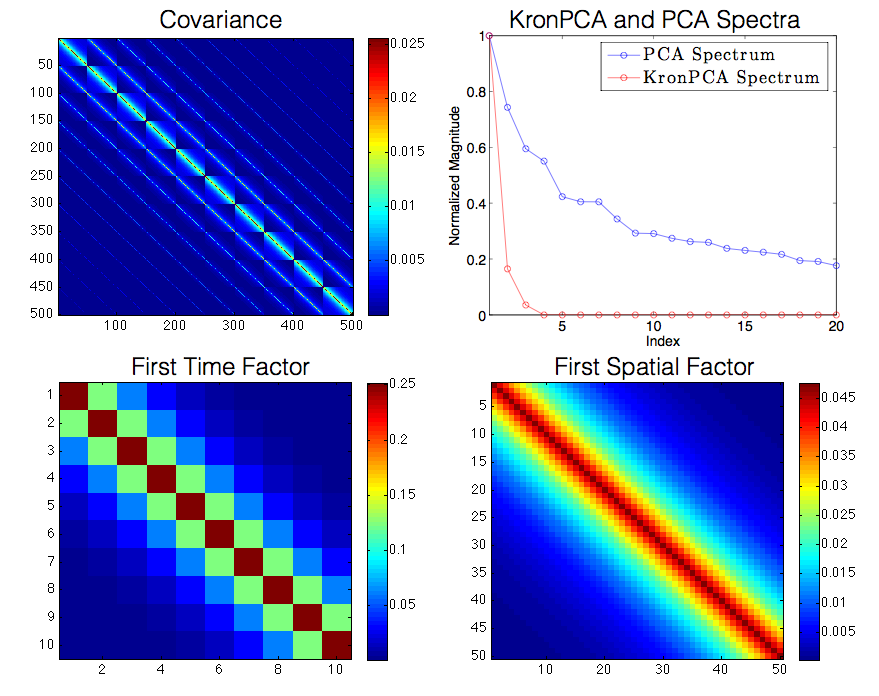}
\caption{Uncorrupted covariance used for the MSE simulations. Subfigures (clockwise starting from upper left): $r=3$ KronPCA covariance with AR factors $\mathbf{A}_i$, $\mathbf{B}_i$ \eqref{SumApprox}; its KronPCA and PCA spectra; and the first spatial and temporal factors of the original covariance.  }
\label{Fig:Cov}
\end{figure}
\begin{figure}[htb]
\centering
\includegraphics[width=3.6in]{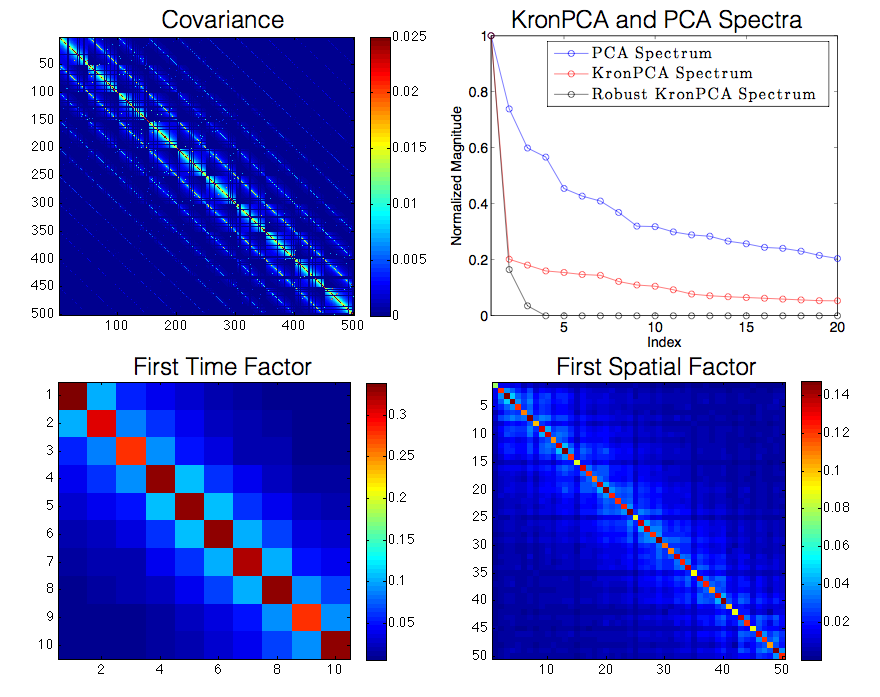}
\caption{Corrupted version of the $r=3$ KronPCA covariance (Figure \ref{Fig:Cov}), used to test the robustness of the proposed Robust KronPCA algorithm. Subfigures (Clockwise from upper left): covariance with sparse corruptions \eqref{Eq:SparseModel}; its KronPCA, Robust KronPCA, and PCA spectra; and the (non-robust) estimates of the first spatial and temporal factors of the corrupted covariance. Note that the corruption spreads the KronPCA spectrum and the significant corruption of the first Kronecker factor in the non-robust KronPCA estimate.}
\label{Fig:CovCor}
\end{figure}

%

\begin{figure}[htb]
\centering
\includegraphics[width=3.6in]{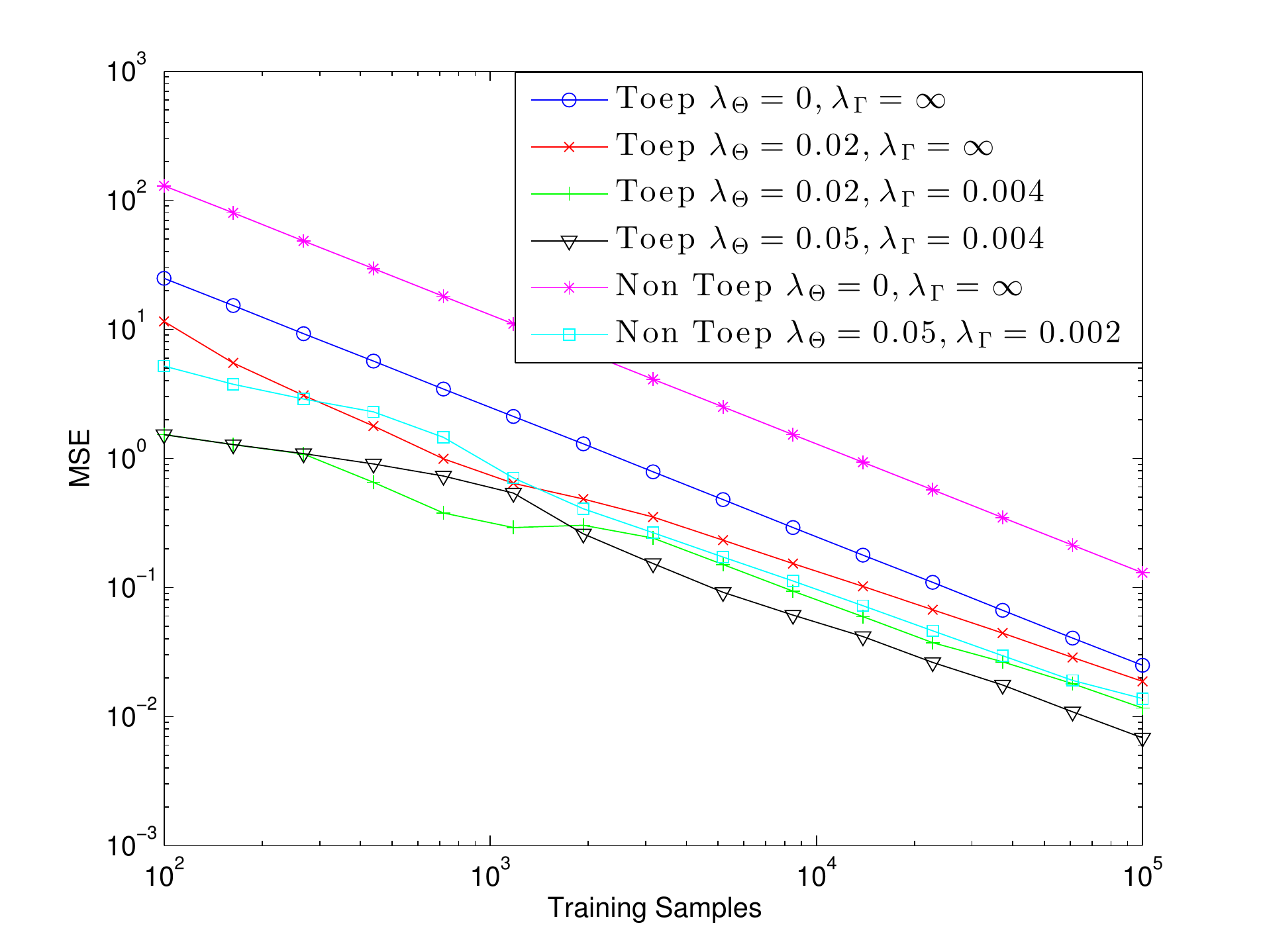}
\caption{MSE plots for both Toeplitz Robust KronPCA (Toep) and non Toeplitz Robust KronPCA (Non Toep) estimation of the Toeplitz corrupted covariance, as a function of the number of training samples. Note the advantages of using each of Toeplitz structure, separation rank penalization, and sparsity regularization, as proposed in this paper. The regularization parameter values shown are those used at $n=10^5$ samples, the values for lower sample sizes are set proportionally using the $n$-dependent formulas given by \eqref{Eq:beta} and Theorems \ref{Thm:HDC} and \ref{Thm:HDCToep}. }
\label{Fig:MSEToep}
\end{figure}

\begin{figure}[htb]
\centering
\includegraphics[width=3.6in]{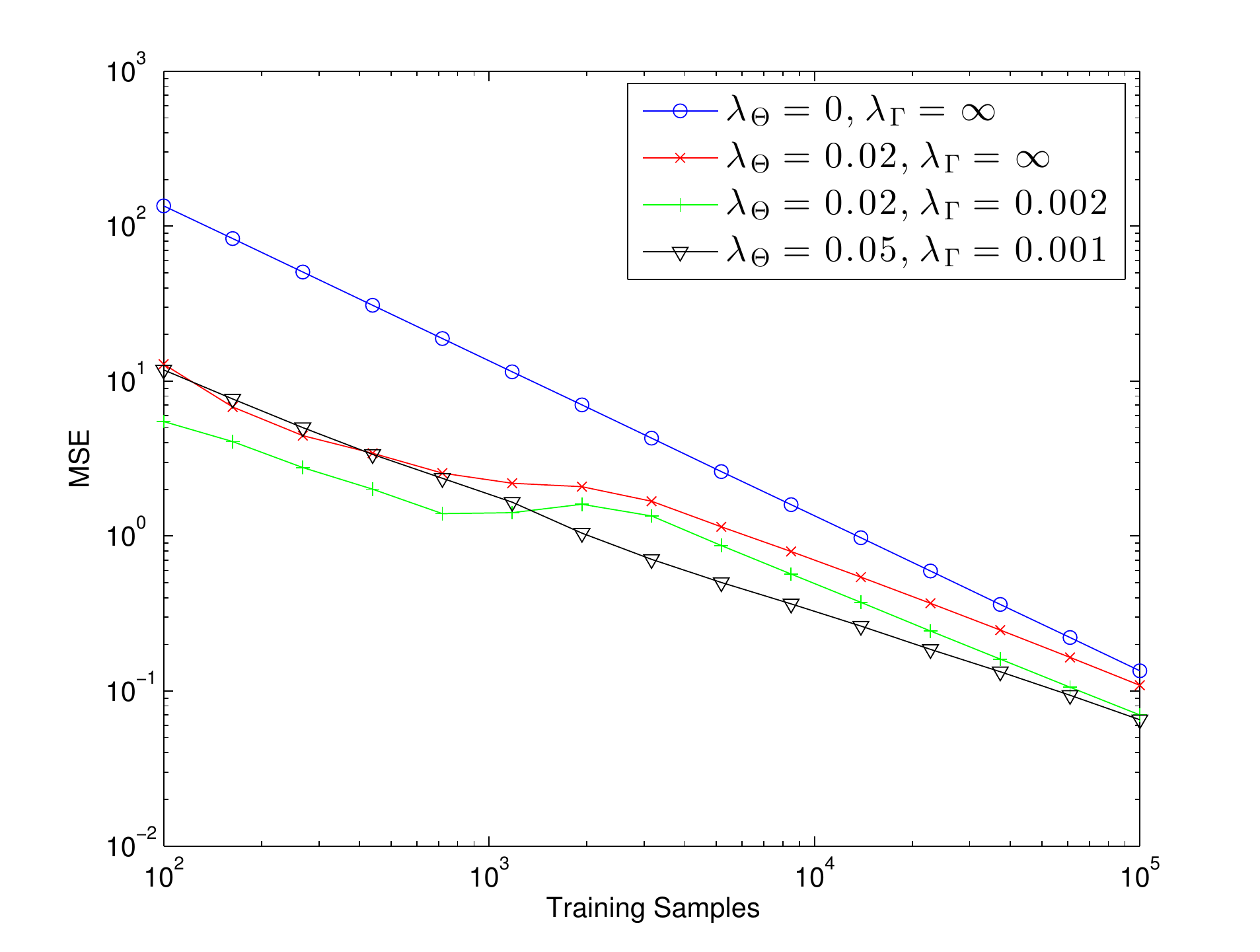}
\caption{MSE plots for non-Toeplitz Robust KronPCA estimation of the corrupted covariance, as a function of the number of training samples. Note the advantages of using both separation rank and sparsity regularization. The regularization parameter values shown are those used at $n=10^5$ samples, the values for lower sample sizes are set proportionally using the $n$-dependent formulas given by \eqref{Eq:beta} and Theorem \ref{Thm:HDC}. }
\label{Fig:MSENT}
\end{figure}

\begin{figure}[htb]
\centering
\includegraphics[width=3.6in]{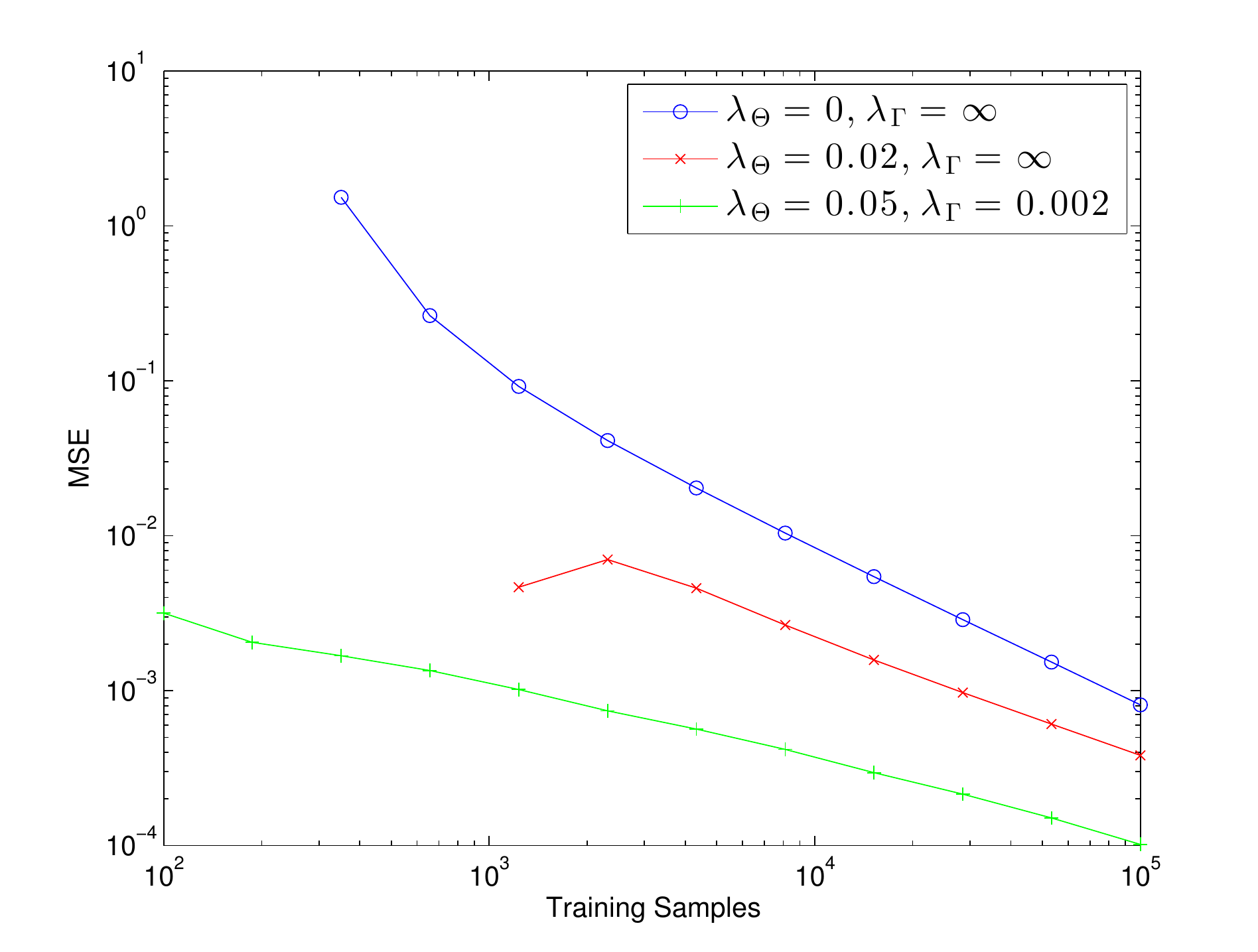}
\caption{3-ahead prediction MSE loss plots using the OLS predictor with corrupted covariance estimated by non-Toeplitz Robust KronPCA. Note the advantages of using both separation rank and sparsity regularization. The regularization parameter values shown are those used at $n=10^5$ samples, the values for lower sample sizes are set proportionally using the $n$-dependent formulas given by \eqref{Eq:beta} and Theorem \ref{Thm:HDC}. The sample covariance ($\lambda_\Theta = 0, \: \lambda_\Gamma = \infty$) and standard KronPCA ($\lambda_\Theta = 0.02, \: \lambda_\Gamma = \infty$) curves are cut short due to aberrant behavior in the low sample regime.}
\label{Fig:MSEP}
\end{figure}
\subsection{Cell Cycle Modeling}

As a real data application, we consider the yeast (S. cerevisiae) metabolic cell cycle dataset used in \cite{deckard2013design}. The dataset consists of 9335 gene probes sampled approximately every 24 minutes for a total of 36 time points, and about 3 complete cell cycles \cite{deckard2013design}.

In \cite{deckard2013design}, it was found that the expression levels of many genes exhibit periodic behavior in the dataset due to the periodic cell cycle. Our goal is to establish that periodicity can also be detected in the temporal component of the Kronecker spatio-temporal correlation model for the dataset. Here the spatial index is the label of the gene probe. We use $p_t=36$ so only one spatio-temporal training sample is available. Due to their high dimensionality, the spatial factor estimates have very low accuracy, but the first few temporal $\mathbf{A}_i$ factors ($36\times 36$) can be effectively estimated (bootstrapping using random sets of 20\% genes achieved less than 3\% RMS variation) due to the large number of spatial variables. We learn the spatiotemporal covariance (both space and time factors) using Robust KronPCA and then analyze the estimated time factors ($\mathbf{A}_i$) to discover periodicity. 
This allows us to consider the overall periodicity of the gene set, taking into account relationships between the genes, as contrasted to the univariate analysis as in \cite{deckard2013design}. The sparse correction to the covariance allows for the partial or complete removal of genes and correlations that are outliers in the sense that their temporal behavior differs from the temporal behavior of the majority of the genes.

Figure \ref{Fig:CorsHist} shows the quantiles of the empirical distribution of the entries of the sample covariance versus those of the normal distribution. The extremely heavy tails motivate the use of a sparse correction term as opposed to the purely quadratic approach of standard KronPCA. Plots of the first row of each temporal factor estimate are shown in Figure \ref{Fig:GenePeriod}. The first three factors are shown when the entire 9335 gene dataset is used to create the sample covariance. Note that 3 cycles of strong temporal periodicity are discovered, which matches our knowledge that approximately 3 complete cell cycles are contained in the sequence. Figure \ref{Fig:GenePeriod2} displays the estimates of the first temporal factor when only a random 500 gene subset is used to compute the sample covariance. Note that the standard KronPCA estimate has much higher variability than the proposed robust KronPCA estimate, masking the presence of periodicity in the temporal factor. This is likely due to the heavy tailed nature of the distribution, and to the fact that robust KronPCA is better able to handle outliers via the sparse correction of the low Kronecker-rank component.   


\begin{figure}[htb]
\centering
\includegraphics[width=3.6in]{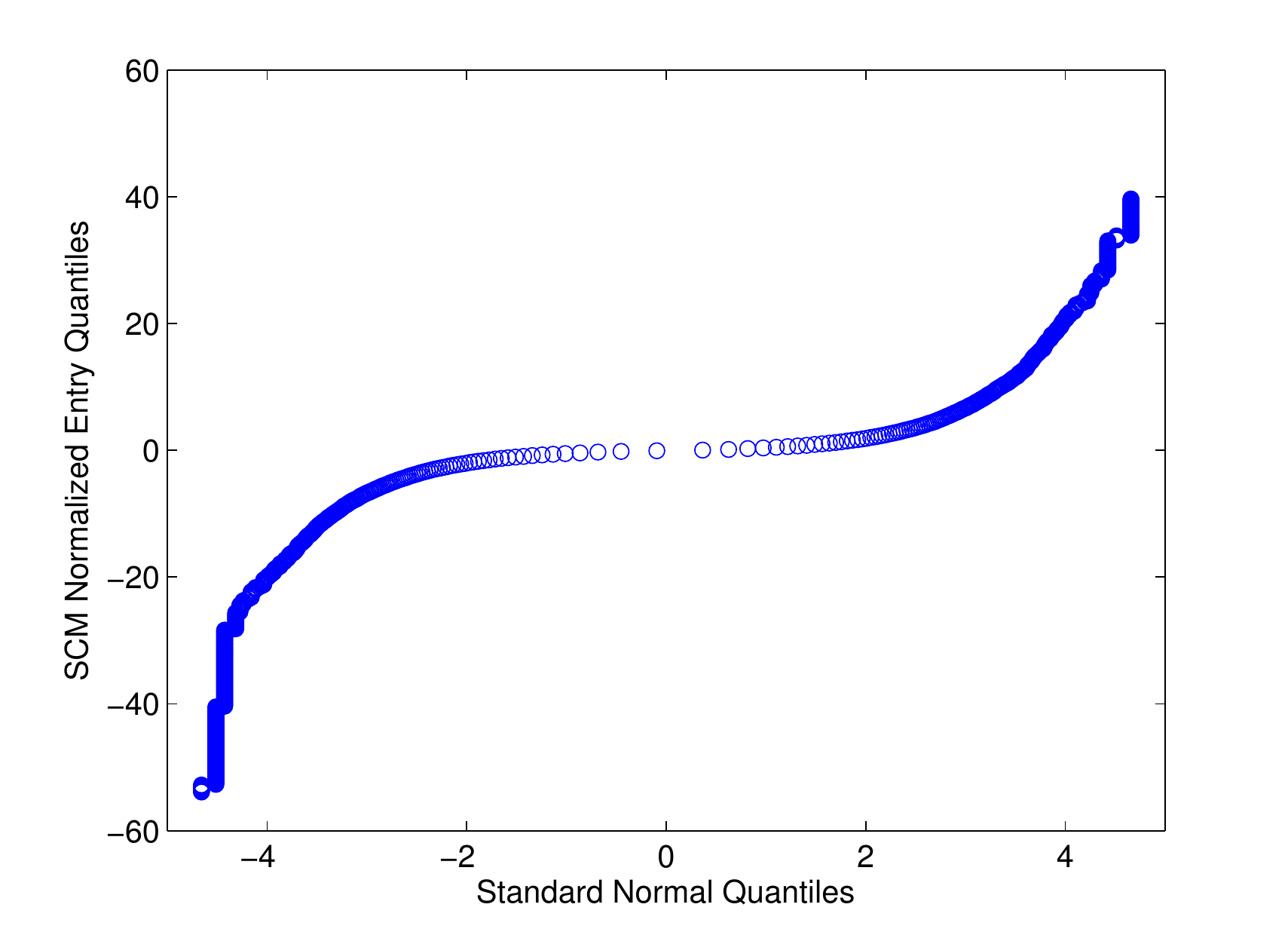}
\caption{Plot of quantiles of the empirical distribution of the sample covariance entries versus those of the normal distribution (QQ). Note the very heavy tails, suggesting that an 2-norm based approach will break down relative to the Robust KronPCA approach allowing for sparse corrections.}
\label{Fig:CorsHist}
\end{figure}

\begin{figure}[htb]
\begin{centering}
\includegraphics[width=3in]{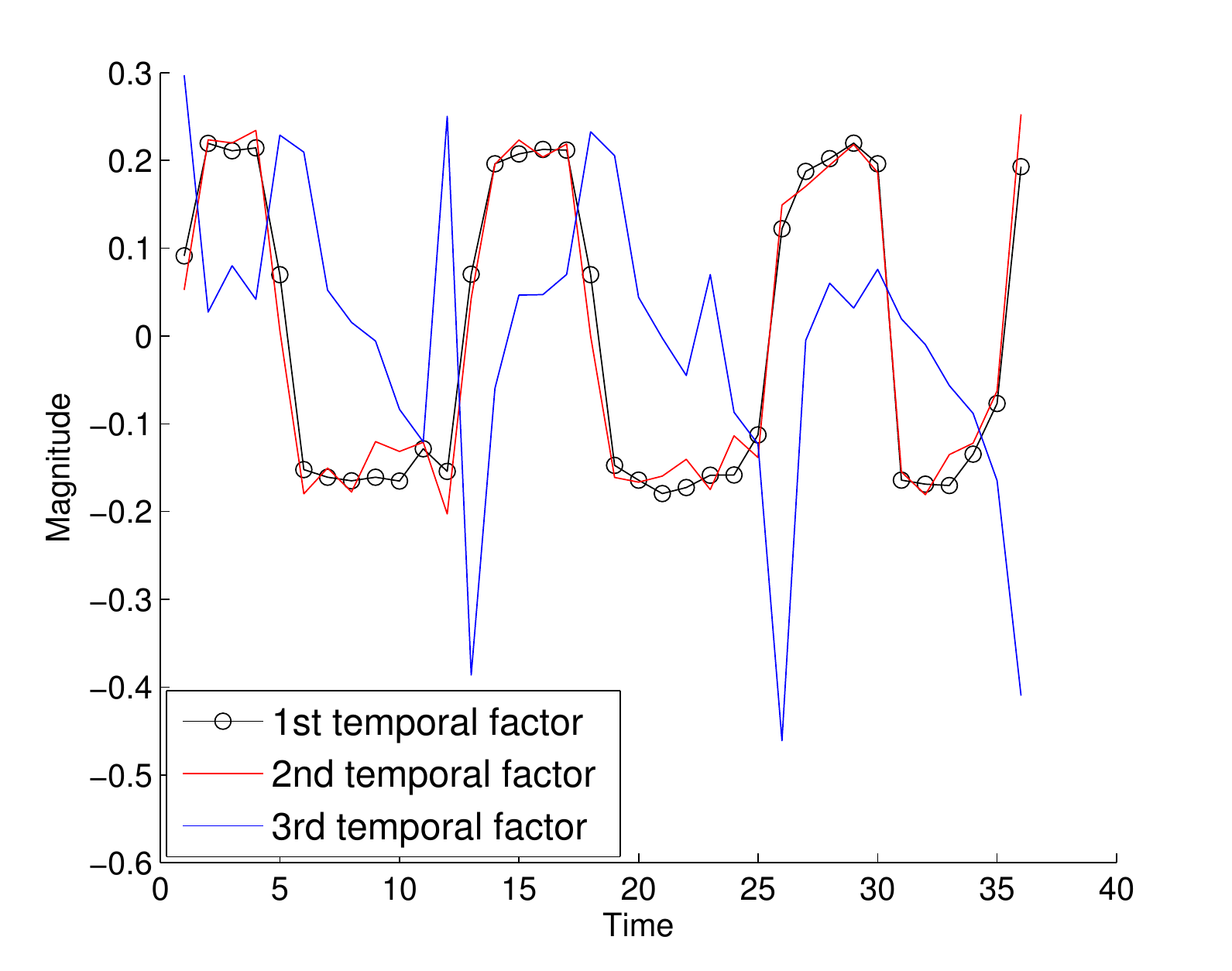}\\
\end{centering}
\caption{Plots of the temporal covariance factors estimated from the entire cell cycle dataset. Shown are the first rows of the first three temporal factors (excluding the first entry). Note the strong periodicity of the first two. }
\label{Fig:GenePeriod}
\end{figure}
\begin{figure}[htb]
\begin{centering}
\includegraphics[width=3.0in]{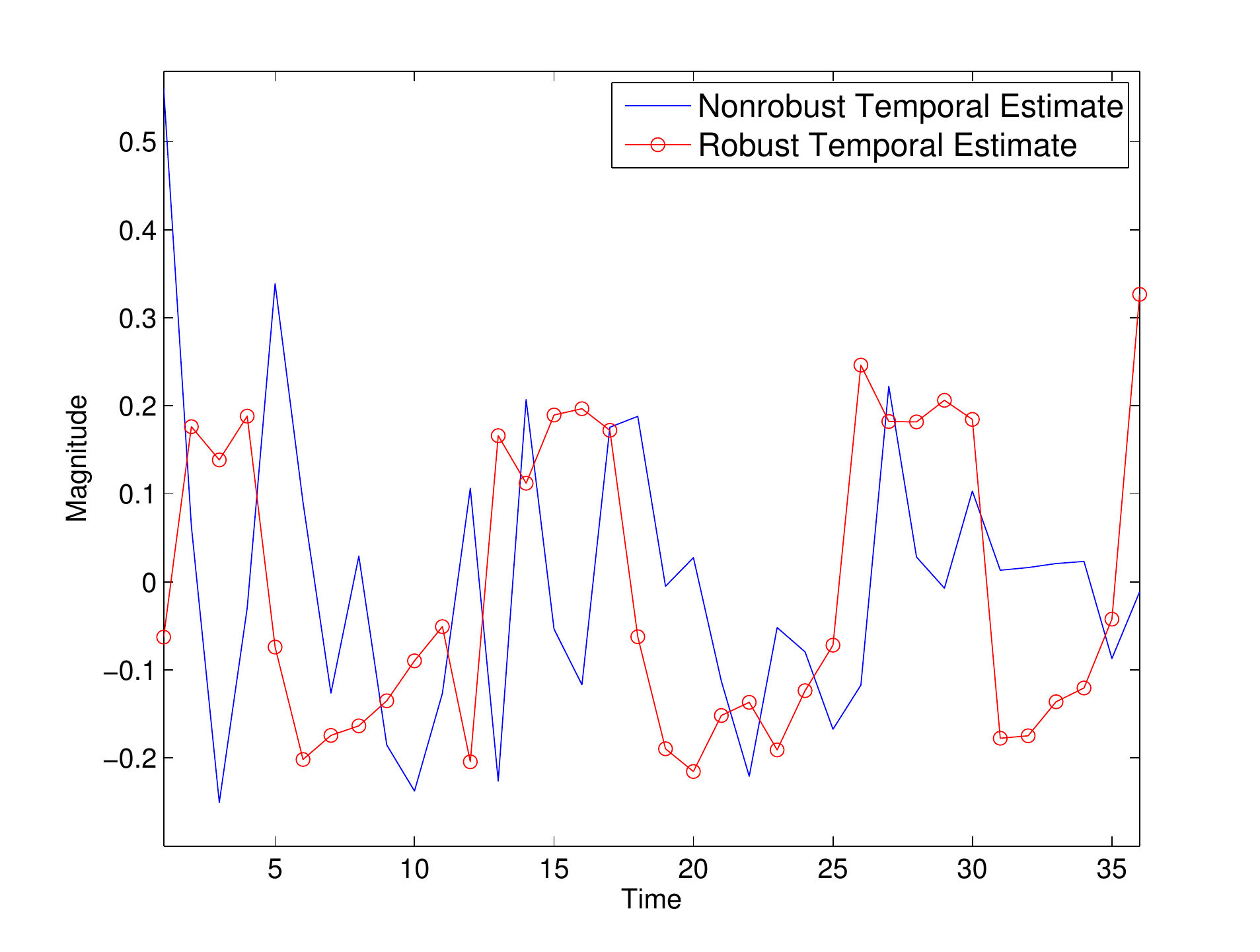}\\
\end{centering}
\caption{Plots of the first temporal covariance factors (excluding the first entry) estimated from the highly subsampled (spatially) cell cycle dataset using robust and standard KronPCA. Note the ability of Robust KronPCA to discover the correct periodicity.}
\label{Fig:GenePeriod2}
\end{figure}

\section{Conclusion}
\label{Sec:Conclusion}
This paper proposed a new robust method for performing Kronecker PCA of sparsely corrupted spatiotemporal covariances in the low sample regime. The method consists of a combination of KronPCA, a sparse correction term, and a temporally block Toeplitz constraint.
%
%
To estimate the covariance under these models, a robust PCA based algorithm was proposed. The algorithm is based on nuclear norm penalization of a Frobenius norm objective to encourage low separation rank, and high dimensional performance guarantees were derived for the proposed algorithm. 
Finally, simulations and experiments with yeast cell cycle data were performed that demonstrate advantages of our methods, both relative to the sample covariance and relative to the standard (non-robust) KronPCA.
\begin{appendices}
\section{Derivation of High Dimensional Consistency}
\label{App:A}
In this section we first prove Theorem \ref{Thm:HDC} and then prove Theorem \ref{Thm:HDCToep}, which are the bounds for the non-Toeplitz and Toeplitz cases respectively.

A general theorem for decomposable regularization of this type was proven in \cite{yang2013dirty}. In \cite{yang2013dirty} the theorem was applied to Robust PCA directly on the sample covariance, hence when appropriate we follow a similar strategy for our proof for the Robust KronPCA case.

Consider the more general M-estimation problem
\begin{equation}
\label{Eq:General}
\min_{(\mathbf{\theta}_k)_{k\in I}} \mathcal{L}\left(\sum_{k\in I} \mathbf{\theta}_k\right) + \sum_{k\in I} \lambda_k \mathcal{R}_k(\mathbf{\theta}_k),
\end{equation}
where $\mathcal{L}$ is a convex differentiable loss function, the regularizers $\mathcal{R}_{k}$ are norms, with regularization parameters $\lambda_k \geq 2\mathcal{R}_k^*(\nabla_{\theta_k}\mathcal{L}(\mathbf{\theta}^*))$. $\mathcal{R}_k^*$ is the dual norm of the norm $\mathcal{R}_k$, $\nabla_\theta$ denotes the gradient with respect to $\theta$, and $\theta^*$ is the true parameter value. 

To emphasize that $\mathcal{L}$ depends on the observed training data $\mathbf{X}$ (in our case through the sample covariance), we also write $\mathcal{L}(\theta; \mathbf{X})$. Let $\mathcal{M}_k$ be the model subspace associated with the constraints enforced by $\mathcal{R}_k$ \cite{yang2013dirty}.
Assume the following conditions are satisfied:
\begin{enumerate}
\item \label{Cond:1} The loss function $\mathcal{L}$ is convex and differentiable.
\item \label{Cond:2} Each norm $\mathcal{R}_k$ ($k \in \mathcal{I}$) is decomposable with respect to the subspace pairs $(\mathcal{M}_k,\bar{\mathcal{M}}_k^\perp)$, where $\mathcal{M}_k\subseteq \bar{\mathcal{M}}_k$.
\item \label{Cond:3} (Restricted Strong Convexity) For all $\mathbf{\Delta}\in \Omega_k$, where $\Omega_k$ is the parameter space for parameter component $k$,
\begin{align}
\delta\mathcal{L}(\mathbf{\Delta}_k;\mathbf{\theta}^*) &:= \mathcal{L}(\mathbf{\theta}^* + \mathbf{\Delta}_k)-\mathcal{L}(\mathbf{\theta}^*) - \left\langle \mathbf{\Delta}_\theta\mathcal{L}(\mathbf{\theta}^*),\mathbf{\Delta}_k \right\rangle\nonumber\\
 &\geq \kappa_{\mathcal{L}}\|\mathbf{\Delta}_k\|^2 - g_k\mathcal{R}_k^2(\mathbf{\Delta}_k),
\end{align}
where $\kappa_\mathcal{L}$ is a ``curvature" parameter, and $g_k\mathcal{R}^2_k(\mathbf{\Delta}_k )$ is a ``tolerance" parameter.
\item \label{Cond:4}(Structural Incoherence) For all $\mathbf{\Delta}_k \in \Omega_k$,
\begin{align}
|\mathcal{L}&(\mathbf{\theta}^* + \sum_{k\in I} \mathbf{\Delta}_k) + (|I| - 1)\mathcal{L}(\mathbf{\theta}^*) - \sum_{k \in I }\mathcal{L}(\mathbf{\theta}^* + \mathbf{\Delta}_k)| \nonumber\\
&\leq \frac{\kappa_\mathcal{L}}{2}\sum_{k\in I} \|\mathbf{\Delta}_k\|^2 + \sum_{k\in I} h_k\mathcal{R}_k^2(\mathbf{\Delta}_k).
\end{align}
\end{enumerate}
Define the \textit{subspace compatibility constant} as $\Psi_k(\mathcal{M},\|\cdot\|):=\sup_{u\in \mathcal{M}\setminus\{0\}}\frac{\mathcal{R}_k(u)}{\|u\|}$. Given these assumptions, the following theorem holds (Corollary 1 in \cite{yang2013dirty}):
\begin{theorem}
\label{Thm:Gen}
Suppose that the subspace-pairs are chosen such that the true parameter values $\mathbf{\theta}_{k}^*\in\mathcal{M}_k$. Then the parameter error bounds are given as:
\begin{equation}
\|\hat{\mathbf{\theta}} - \mathbf{\theta}^*\| \leq \left(\frac{3|I|}{2\bar{\kappa}}\right)\max_{k\in I} \lambda_k \Psi_k(\bar{\mathcal{M}}_k).
\end{equation}
where
\begin{align*}
\bar{\kappa} := \frac{\kappa_{\mathcal{L}}}{2} - 32 \bar{g}|I|\left(\max_{k\in I} \lambda_k \Psi_k(\bar{\mathcal{M}}_k)\right)^2,\\
\bar{g} := \max_k \frac{1}{\lambda_k}\sqrt{g_k + h_k}.
\end{align*}
\end{theorem}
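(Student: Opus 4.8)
The plan is to carry out the unified decomposable-regularizer analysis of \cite{yang2013dirty}, which extends the Negahban--Wainwright framework to the sum structure in \eqref{Eq:General}. Throughout write $\hat{\Delta}_k := \hat{\theta}_k - \theta_k^*$ for the per-component error, $\bar{\Delta} := \sum_{k\in I}\hat{\Delta}_k$ for the aggregate error, and interpret the left-hand side of the claimed bound as $\|\hat{\theta}-\theta^*\| = \sum_{k\in I}\|\hat{\Delta}_k\|$. The argument has three movements: (i) an \emph{upper} bound on the second-order loss deviation $\delta\mathcal{L}(\bar{\Delta};\theta^*)$ obtained from optimality of $\hat{\theta}$, (ii) a \emph{lower} bound obtained from restricted strong convexity and structural incoherence, and (iii) a quadratic inequality combining the two.

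For the upper bound I would start from the \textbf{basic inequality}: since $\hat{\theta}$ minimizes \eqref{Eq:General}, comparing objective values at $\hat{\theta}$ and $\theta^*$ and using convexity of $\mathcal{L}$ yields
\[
\delta\mathcal{L}(\bar{\Delta};\theta^*) \le -\sum_{k\in I}\langle \nabla_{\theta_k}\mathcal{L}(\theta^*),\hat{\Delta}_k\rangle + \sum_{k\in I}\lambda_k\big(\mathcal{R}_k(\theta_k^*)-\mathcal{R}_k(\hat{\theta}_k)\big).
\]
The gradient terms are controlled by the dual-norm inequality and the hypothesis $\lambda_k\ge 2\mathcal{R}_k^*(\nabla_{\theta_k}\mathcal{L}(\theta^*))$, giving $|\langle\nabla_{\theta_k}\mathcal{L}(\theta^*),\hat{\Delta}_k\rangle|\le \tfrac{\lambda_k}{2}\mathcal{R}_k(\hat{\Delta}_k)$. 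Splitting each $\hat{\Delta}_k$ as $\mathcal{P}_{\bar{\mathcal{M}}_k}\hat{\Delta}_k + \mathcal{P}_{\bar{\mathcal{M}}_k^\perp}\hat{\Delta}_k$ and invoking decomposability (Condition \ref{Cond:2}) together with $\theta_k^*\in\mathcal{M}_k\subseteq\bar{\mathcal{M}}_k$ gives $\mathcal{R}_k(\theta_k^*)-\mathcal{R}_k(\hat{\theta}_k)\le \mathcal{R}_k(\mathcal{P}_{\bar{\mathcal{M}}_k}\hat{\Delta}_k)-\mathcal{R}_k(\mathcal{P}_{\bar{\mathcal{M}}_k^\perp}\hat{\Delta}_k)$. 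Because $\delta\mathcal{L}(\bar{\Delta};\theta^*)\ge 0$ by convexity, these estimates first produce a cone condition $\sum_k\lambda_k\mathcal{R}_k(\mathcal{P}_{\bar{\mathcal{M}}_k^\perp}\hat{\Delta}_k)\le 3\sum_k\lambda_k\mathcal{R}_k(\mathcal{P}_{\bar{\mathcal{M}}_k}\hat{\Delta}_k)$, and then, after applying the subspace compatibility constant $\mathcal{R}_k(\mathcal{P}_{\bar{\mathcal{M}}_k}\hat{\Delta}_k)\le\Psi_k(\bar{\mathcal{M}}_k)\|\hat{\Delta}_k\|$, the upper bound $\delta\mathcal{L}(\bar{\Delta};\theta^*)\le \tfrac{3}{2}\sum_{k}\lambda_k\Psi_k(\bar{\mathcal{M}}_k)\|\hat{\Delta}_k\|$.

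For the lower bound I would use structural incoherence (Condition \ref{Cond:4}) to decouple the aggregate loss into the per-component losses, obtaining
\[
\delta\mathcal{L}(\bar{\Delta};\theta^*)\ge \sum_{k\in I}\delta\mathcal{L}(\hat{\Delta}_k;\theta^*) - \tfrac{\kappa_{\mathcal{L}}}{2}\sum_{k\in I}\|\hat{\Delta}_k\|^2 - \sum_{k\in I} h_k\mathcal{R}_k^2(\hat{\Delta}_k),
\]
and then apply restricted strong convexity (Condition \ref{Cond:3}) term by term, $\delta\mathcal{L}(\hat{\Delta}_k;\theta^*)\ge\kappa_{\mathcal{L}}\|\hat{\Delta}_k\|^2 - g_k\mathcal{R}_k^2(\hat{\Delta}_k)$, to reach $\delta\mathcal{L}(\bar{\Delta};\theta^*)\ge \tfrac{\kappa_{\mathcal{L}}}{2}\sum_k\|\hat{\Delta}_k\|^2 - \sum_k(g_k+h_k)\mathcal{R}_k^2(\hat{\Delta}_k)$. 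The cone condition together with the compatibility constant bounds $\mathcal{R}_k(\hat{\Delta}_k)\le 4\Psi_k(\bar{\mathcal{M}}_k)\|\hat{\Delta}_k\|$, so using $g_k+h_k\le\bar{g}^2\lambda_k^2$ the tolerance terms are dominated by a constant multiple of $\big(\max_k\lambda_k\Psi_k(\bar{\mathcal{M}}_k)\big)^2\sum_k\|\hat{\Delta}_k\|^2$; this is precisely the quantity subtracted from $\kappa_{\mathcal{L}}/2$ in the definition of the effective curvature $\bar{\kappa}$.

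Chaining the lower and upper bounds leaves $\bar{\kappa}\sum_k\|\hat{\Delta}_k\|^2\le\tfrac{3}{2}\big(\max_k\lambda_k\Psi_k(\bar{\mathcal{M}}_k)\big)\sum_k\|\hat{\Delta}_k\|$, and a single application of Cauchy--Schwarz in the form $\big(\sum_k\|\hat{\Delta}_k\|\big)^2\le|I|\sum_k\|\hat{\Delta}_k\|^2$ converts this into $\sum_k\|\hat{\Delta}_k\|\le\frac{3|I|}{2\bar{\kappa}}\max_k\lambda_k\Psi_k(\bar{\mathcal{M}}_k)$, the asserted bound. The main obstacle is the lower-bound step: unlike the single-regularizer Negahban--Wainwright setting, $\delta\mathcal{L}(\bar{\Delta})$ does not split across components, so the entire argument rests on the structural incoherence condition to perform this decoupling while keeping the cross-term error small. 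The delicate point is verifying that the combined tolerance and incoherence contributions are genuinely dominated by $\kappa_{\mathcal{L}}/2$, i.e. that $\bar{\kappa}>0$, since otherwise the final quadratic inequality degenerates; this is exactly where the constants built into $\bar{g}$ and $\bar{\kappa}$ must be pinned down.
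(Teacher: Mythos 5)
The paper never proves this statement: Theorem~\ref{Thm:Gen} is imported verbatim as Corollary~1 of the dirty-models framework of \cite{yang2013dirty}, and the appendix of the paper only verifies its hypotheses (Conditions~\ref{Cond:1}--\ref{Cond:4}, the subspace compatibility constants, and the validity of the choices of $\lambda_\Theta,\lambda_\Gamma$) for the Robust KronPCA loss. So there is no internal proof to compare against; what you have written is a reconstruction of the proof in the cited reference, and it does follow that reference's actual route: basic inequality from optimality, dual-norm control of the gradient terms via $\lambda_k \geq 2\mathcal{R}_k^*(\nabla_{\theta_k}\mathcal{L}(\theta^*))$, decomposability to obtain a cone condition, restricted strong convexity plus structural incoherence for the curvature lower bound, and a quadratic inequality closed by Cauchy--Schwarz. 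The architecture and the role played by each condition are right.

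Two details need repair before this counts as a complete proof. First, the cone condition you derive is the $\lambda$-weighted \emph{aggregate} statement $\sum_k\lambda_k\mathcal{R}_k(\mathcal{P}_{\bar{\mathcal{M}}_k^\perp}\hat{\Delta}_k)\leq 3\sum_k\lambda_k\mathcal{R}_k(\mathcal{P}_{\bar{\mathcal{M}}_k}\hat{\Delta}_k)$; it does \emph{not} imply the per-component bound $\mathcal{R}_k(\hat{\Delta}_k)\leq 4\Psi_k(\bar{\mathcal{M}}_k)\|\hat{\Delta}_k\|$ that you invoke, since the inequality can be violated for an individual $k$ while holding in the weighted sum. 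The fix is to keep everything aggregated:
\begin{equation*}
\sum_k(g_k+h_k)\mathcal{R}_k^2(\hat{\Delta}_k)\leq\bar{g}^2\Bigl(\sum_k\lambda_k\mathcal{R}_k(\hat{\Delta}_k)\Bigr)^2\leq 16\,\bar{g}^2\Bigl(\max_k\lambda_k\Psi_k(\bar{\mathcal{M}}_k)\Bigr)^2|I|\sum_k\|\hat{\Delta}_k\|^2,
\end{equation*}
which is all the argument actually needs. Second, this route produces an effective curvature of the form $\kappa_{\mathcal{L}}/2-16\,\bar{g}^2|I|\bigl(\max_k\lambda_k\Psi_k(\bar{\mathcal{M}}_k)\bigr)^2$, which does not literally match the theorem's $\bar{\kappa}=\kappa_{\mathcal{L}}/2-32\,\bar{g}\,|I|\bigl(\max_k\lambda_k\Psi_k(\bar{\mathcal{M}}_k)\bigr)^2$ (note $\bar{g}$ versus $\bar{g}^2$, given $\bar{g}:=\max_k\lambda_k^{-1}\sqrt{g_k+h_k}$); to recover the stated constant you would have to track the bookkeeping of \cite{yang2013dirty} exactly. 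Neither issue affects how the theorem is used in this paper, since for the Frobenius loss here $g_k=h_k=0$, hence $\bar{g}=0$ and $\bar{\kappa}=\kappa_{\mathcal{L}}/2$ under either accounting.
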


We can now prove Theorem \ref{Thm:HDC}.
\begin{proof}[Proof of Theorem \ref{Thm:HDC}]
To apply Theorem \ref{Thm:Gen} to the KronPCA estimation problem, we first check the conditions.
In our objective function \eqref{Eq:SparseOpt}, we have a loss $\mathcal{L}(\mathbf{\Sigma};\mathbf{X}) = \|\mathbf{\Sigma}-\hat{\mathbf{\Sigma}}_{SCM}\|_F^2$, which of course satisfies condition \ref{Cond:1}. It was shown in \cite{yang2013dirty} that the nuclear norm and the 1-norm both satisfy Condition \ref{Cond:2} with respect to $\mathcal{M}_\Theta, \bar{\mathcal{M}}_\Theta$ and $\mathcal{M}_\Gamma, \bar{\mathcal{M}}_\Gamma$ respectively. Hence we let the two $\mathcal{R}_k$ be the nuclear norm ($\mathcal{R}_\Theta(\cdot) = \|\cdot\|_*$) and the 1-norm ($\mathcal{R}_\Gamma(\cdot) = \|\cdot\|_1$) terms in \eqref{Eq:SparseOpt}. The restricted strong convexity condition (Condition \ref{Cond:3}) holds trivially with $\kappa_\mathcal{L} = 1$ and $g_k = 0$ \cite{yang2013dirty}.

It was shown in \cite{yang2013dirty} that for the linear Frobenius norm mismatch term ($\mathcal{L}(\mathbf{\Sigma}) = \|\mathbf{\Sigma}-\hat{\mathbf{\Sigma}}_{SCM}\|_F^2$) that we use in \eqref{Eq:SparseOpt}, the following simpler structural incoherence condition implies Condition \ref{Cond:4} with $h_k = 0$:
\begin{align}
\max&\left\{\sigma_{max}\left(\mathcal{P}_{\bar{\mathcal{M}}_\Theta}\mathcal{P}_{\bar{\mathcal{M}}_\Gamma} \right),\sigma_{max}\left(\mathcal{P}_{\bar{\mathcal{M}}_\Theta}\mathcal{P}_{\bar{\mathcal{M}}_\Gamma^\perp} \right),\right.\\\nonumber
&\left.\sigma_{max}\left(\mathcal{P}_{\bar{\mathcal{M}}_\Theta^\perp}\mathcal{P}_{\bar{\mathcal{M}}_\Gamma} \right),\sigma_{max}\left(\mathcal{P}_{\bar{\mathcal{M}}_\Theta^\perp}\mathcal{P}_{\bar{\mathcal{M}}_\Gamma^\perp} \right)\right\}\leq \frac{1}{16\Lambda^2}
\end{align}
where $\Lambda = \max_{k_1,k_2}\left\{2 + \frac{3\lambda_{k_1} \Psi_{k_1}(\bar{\mathcal{M}}_{k_1})}{\lambda_{k_2} \Psi_{k_2}(\bar{\mathcal{M}}_{k_2})}\right\}$.

The subspace compatibility constants are as follows \cite{yang2013dirty}:
\begin{align}
&\Psi_{\mathbf{\Theta}}(\bar{\mathcal{M}}_{\mathbf{\Theta}}) = \sup_{\mathbf{\Delta}\in\bar{\mathcal{M}}_{\mathbf{\Theta}}\setminus\{0\}}\frac{\|\mathbf{\Delta}\|_*}{\|\mathbf{\Delta}\|_F} \leq \sqrt{2r},\\\nonumber
&\Psi_{\mathbf{\Gamma}}(\bar{\mathcal{M}}_{\mathbf{\Gamma}}) = \sup_{\mathbf{\Delta}\in\bar{\mathcal{M}}_{\mathbf{\Gamma}}\setminus\{0\}}\frac{\|\mathbf{\Delta}\|_1}{\|\mathbf{\Delta}\|_F} \leq \sqrt{s},
\end{align}
where $r$ is the rank of $\mathbf{\Theta}$ and $s$ is the number of nonzero entries in $\mathbf{\Gamma}$. The first follows from the fact that for all $\mathbf{\Theta} \in \bar{\mathcal{M}}_{\Theta}$, $\mathrm{rank}(\mathbf{\Theta}) \leq 2r$ since both the row and column spaces of $\mathbf{\Theta}$ must be of rank $r$ \cite{yang2013dirty}. Hence, we have that
\begin{equation}
\Lambda = 2 + \max\left\{\frac{3\beta\sqrt{2r}}{\lambda\sqrt{s}},\frac{3\lambda\sqrt{s}}{\beta\sqrt{2r}}\right\}.
\end{equation}

Finally, we need to show that both of the regularization parameters satisfy $\lambda_k \geq 2\mathcal{R}_k^*(\nabla_{\theta_k}\mathcal{L}(\mathbf{\theta}^*;\mathbf{X}))$, i.e.
\begin{align}
\label{Eq:ParCond1}
\lambda_\Theta &\geq 2\mathcal{R}_\Theta^*(\nabla_{\Theta}\mathcal{L}(\mathbf{\Theta}_0 + \mathbf{\Gamma}_0;\mathbf{X}))\\\nonumber
\lambda_\Gamma &\geq 2\mathcal{R}_\Gamma(\nabla_{\Gamma}\mathcal{L}(\mathbf{\Theta}_0 + \mathbf{\Gamma}_0;\mathbf{X}))
\end{align}
with high probability. Since the 1-norm is invariant under rearrangement, the argument from \cite{yang2013dirty} still holds and we have that
\begin{equation}
\lambda_\Gamma = 32 \rho(\mathbf{\Sigma}_0)\sqrt{\frac{\log p_tp_s}{n}}
\end{equation}
satisfies \eqref{Eq:ParCond1} with probability at least $1-2\exp(-c_2 \log p_sp_t)$.

For the low rank portion, \eqref{Eq:ParCond1} will hold if \cite{yang2013dirty}
\begin{equation}
\label{Eq:LambdaCond1}
\lambda_\Theta \geq 4 \|\mathcal{R}(\hat{\mathbf{\Sigma}}_{SCM}-\mathbf{\Sigma}_0)\|.
\end{equation}

From \cite{tsiliArxiv} we have that for $t_0\geq f(\epsilon) = 4C \log (1+\frac{2}{\epsilon})$ ($C$ absolute constant given in \cite{tsiliArxiv}), $C$ an absolute constant, and $\alpha \geq 1$
\begin{align}
\|\mathcal{R}(\hat{\mathbf{\Sigma}}_{SCM}-\mathbf{\Sigma}_0)\| \leq  \frac{\|\mathbf{\Sigma}_0\|t_0}{1-2\epsilon}\frac{p_t^2 + p_s^2 + \log M}{n}
\end{align}
with probability at least $1-2M^{-t_0/4C}$ and otherwise
\begin{align}
\|\mathcal{R}(\hat{\mathbf{\Sigma}}_{SCM}-\mathbf{\Sigma}_0)\| \leq\frac{\|\mathbf{\Sigma}_0\|\sqrt{t_0}}{1-2\epsilon}\sqrt{\frac{p_t^2 + p_s^2 + \log M}{n}}
\end{align}
with probability at least $1-2M^{-t_0/4C}$. Thus our choice of $\lambda_\theta$ satisfies \eqref{Eq:LambdaCond1} with high probability. To satisfy the constraints on $t$, we need $t_0 > f^2(\epsilon)$. Clearly, $\epsilon$ can be adjusted to satisfy the constraint and
\begin{equation}
\label{Eq:k}
k = 4/(1-2\epsilon).
\end{equation}
Recalling the sparsity probability $1-2\exp(-c_2 \log p_tp_s)$, the union bound implies \eqref{Eq:ParCond1} is satisfied for both regularization parameters with probability at least $1-2\exp(-c_2 \log p_tp_s)-2\exp(-(t_0/4C)\log M)\geq 1-c\exp(-c_0 \log p_t p_s)$ and the proof of Theorem \ref{Thm:HDC} is complete.
\end{proof}

Next, we present the proof for Theorem \ref{Thm:HDCToep}, emphasizing only the parts that differ from the non Toeplitz proof of Theorem \ref{Thm:HDC}, since much of the proof for the previous theorem carries over to the Toeplitz case. Let 
\begin{align}
\mathbf{\Delta}_n &= \mathcal{R}(\mathbf{W})\\\nonumber 
\mathbf{W} &= \hat{\mathbf{\Sigma}}_{SCM} - \mathbf{\Sigma}_0.
\end{align}
We require the following corollary based on an extension of a theorem in \cite{tsiliArxiv} to the Toeplitz case:
\begin{corollary}
\label{Cor:ToepNorm}
Suppose $\mathbf{\Sigma}_0$ is a $p_tp_s\times p_tp_s$ covariance matrix, $\|\mathbf{\Sigma}_0\|_2$ is finite for all $p_t,p_s$, and let $M = \max(p_t,p_s,n)$. Let $\epsilon'<0.5$ be fixed and assume that $t_0 \geq f(\epsilon)$ and $C = \max(C_1,C_2)> 0$. We have that
\begin{equation}
\|\mathbf{\Delta}_n\|_2 \leq\frac{\|\mathbf{\Sigma}_0\|}{1-2\epsilon'}\max\left\{t_0\alpha^2,\sqrt{t_0}\alpha \right\}
\end{equation}
with probability at least $1-2M^{-\frac{t_0}{4C}}$, where
\begin{equation}
\alpha = \frac{2p_t + p_s^2 + \log M}{n}.
\end{equation}
\end{corollary}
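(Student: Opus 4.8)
The plan is to adapt the operator-norm concentration argument that \cite{tsiliArxiv} uses for the non-Toeplitz rearranged fluctuation, exploiting the single structural fact that the Toeplitz-encoding matrix $\mathbf{P}$ has orthonormal rows. The object to control is the operator norm of the Toeplitz-rearranged error $\mathbf{P}\mathcal{R}(\mathbf{W})$, a $(2p_t-1)\times p_s^2$ matrix, in place of the full $p_t^2 \times p_s^2$ matrix $\mathcal{R}(\mathbf{W})$; the reduction of the time dimension from $p_t^2$ to $2p_t-1$ is precisely what turns the $p_t^2$ of the non-Toeplitz bound into the $2p_t$ in $\alpha$.

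First I would write the norm as a bilinear supremum over unit spheres,
\[\|\mathbf{P}\mathcal{R}(\mathbf{W})\|_2 = \sup_{\|\mathbf{w}\|_2 = \|\mathbf{v}\|_2 = 1} \mathbf{w}^T \mathbf{P}\mathcal{R}(\mathbf{W})\mathbf{v},\]
with $\mathbf{w}\in\mathbb{R}^{2p_t-1}$ and $\mathbf{v}\in\mathbb{R}^{p_s^2}$. Since the rows of $\mathbf{P}$ have disjoint supports (one per time lag) and unit norm under the normalization $c_j = 1/\sqrt{p_t-|j|}$ of \eqref{Eq:ToepCoef}, we have $\mathbf{P}\mathbf{P}^T = \mathbf{I}$, so $\mathbf{u} := \mathbf{P}^T\mathbf{w}$ is a unit vector in $\mathbb{R}^{p_t^2}$ ranging over the $(2p_t-1)$-dimensional subspace $\mathrm{range}(\mathbf{P}^T)$ of Toeplitz-structured reshapings. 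Using the defining identity of the rearrangement, the bilinear form equals $\langle \mathbf{W}, \mathbf{U}\otimes\mathbf{V}\rangle$, where $\mathbf{U}$ and $\mathbf{V}$ are the reshapings of $\mathbf{u}$ and $\mathbf{v}$ with $\|\mathbf{U}\|_F = \|\mathbf{V}\|_F = 1$ (and $\mathbf{U}$ now Toeplitz).

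Next, for each fixed $(\mathbf{U},\mathbf{V})$ the scalar $\langle \mathbf{W}, \mathbf{U}\otimes\mathbf{V}\rangle$ is a centered sum of quadratic forms $n^{-1}\sum_i \mathbf{x}_i^T(\mathbf{U}\otimes\mathbf{V})\mathbf{x}_i$ in the i.i.d. Gaussian samples, to which I would apply the Hanson--Wright / Bernstein tail bound exactly as in \cite{tsiliArxiv}. Because $\|\mathbf{U}\otimes\mathbf{V}\|_F = 1$ and $\|\mathbf{U}\otimes\mathbf{V}\|_2 \leq 1$, both the sub-Gaussian and sub-exponential variance proxies are controlled by $\|\mathbf{\Sigma}_0\|$, producing the two-regime tail whose square-root and linear branches give the $\max\{t_0\alpha^2, \sqrt{t_0}\alpha\}$ form. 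I would then discretize with $\epsilon'$-nets and union bound: the net over the unit sphere of $\mathrm{range}(\mathbf{P}^T)$ has cardinality at most $(1+2/\epsilon')^{2p_t-1}$ and the net over $\mathbb{R}^{p_s^2}$ at most $(1+2/\epsilon')^{p_s^2}$, with the net-to-supremum passage contributing the factor $1/(1-2\epsilon')$. The union-bound exponent then scales as $(2p_t-1+p_s^2)\log(1+2/\epsilon') + \log M$, yielding the stated $\alpha$ and failure probability at most $2M^{-t_0/(4C)}$ once $t_0 \geq f(\epsilon')$.

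The whole argument reduces to the observation that $\mathbf{P}^T$ isometrically embeds the low-dimensional sphere into the sphere of $\mathbb{R}^{p_t^2}$, after which the per-vector concentration and covering machinery of \cite{tsiliArxiv} transfer verbatim and the only change is the smaller covering exponent. The step requiring the most care is confirming that restricting $\mathbf{u}$ to $\mathrm{range}(\mathbf{P}^T)$ does not inflate any of the variance proxies entering Bernstein's inequality, i.e. that these constants depend only on $\|\mathbf{\Sigma}_0\|$ and on $\|\mathbf{U}\otimes\mathbf{V}\|_F = 1$ and are otherwise dimension-free, so that the Toeplitz gain is realized purely through the reduced net cardinality rather than through any change in the tail constants.
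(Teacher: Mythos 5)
Your proposal is correct and follows essentially the same route as the paper's proof in Appendix~\ref{App:B}: the paper likewise reduces $\|\mathbf{P}\mathbf{\Delta}_n\|_2$ to a supremum over an $\epsilon'$-net on $\mathcal{S}_{2p_t-2}\times\mathcal{S}_{p_s^2-1}$, invokes the per-pair Bernstein bound of Corollary~\ref{Corr:B1} after verifying exactly your key observation that $\|\mathbf{x}^T\mathbf{P}\|_2=1$ (i.e., $\mathbf{P}^T$ isometrically embeds the low-dimensional sphere into $\mathcal{S}_{p_t^2-1}$), and obtains the two regimes $\sqrt{t_0}\alpha$ and $t_0\alpha^2$ from the sub-Gaussian and sub-exponential branches of the tail. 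The only cosmetic difference is that the paper carries the covering exponent as $2p_t+p_s^2$ rather than $2p_t-1+p_s^2$.
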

The proof of this result is in Appendix \ref{App:B}.

\begin{proof}[Proof of Theorem \ref{Thm:HDCToep}]
Adjusting for the objective in \eqref{Eq:OptProbToep}, let the regularizers $\mathcal{R}_k$ be $\mathcal{R}_\Theta(\cdot) = \|\cdot\|_*$ and $\mathcal{R}_\Gamma(\mathbf{M}) = \sum_{j\in\mathcal{I}}c_j\|\mathbf{M}_{j+p_t}\|_1$. Condition \ref{Cond:1} still holds as in the general non-Toeplitz case, and Condition \ref{Cond:2} holds because $\mathcal{R}_\Gamma$ is a positively weighted sum of norms, forming a norm on the product space (which is clearly the entire space). $\mathcal{R}_\Gamma$ is decomposable because the 1-norm is decomposable and the overall model subspace is the product of the model subspaces for each row. The remaining two conditions trivially remain the same from the non Toeplitz case.

The subspace compatibility constant remains the same for the nuclear norm, and for the sparse case we have for all $\mathbf{\Delta}$
\begin{align}
\mathcal{R}_{\Gamma}(\mathbf{ \Delta}) \leq \|\mathbf{\Delta}\|_1
\end{align}
hence, the supremum under the 1-norm is greater than the supremum under the row weighted norm. Thus, the subspace compatibility constant is still less than or equal to $\sqrt{s}$, where $s$ is now the number of nonzero entries in $\mathbf{P}\mathcal{R}(\mathbf{\Gamma})$. A tighter bound is achievable if the degree of sparsity in each row is known.

We now show that the regularization parameters chosen satisfy \eqref{Eq:ParCond1} with high probability. For the sparse portion, we need to find the dual of $\mathcal{R}_\Gamma$, defined as
\begin{align}
\mathcal{R}^*_{\Gamma}(\mathbf{Z}) = \sup\left\{\langle\langle \mathbf{Z},\mathbf{X}\rangle\rangle | \mathcal{R}_{\Gamma}(\mathbf{X}) \leq 1\right\},
\end{align}
where $\langle\langle \mathbf{Z},\mathbf{X}\rangle\rangle = \mathrm{trace}\{\mathbf{Z}^T\mathbf{X}\}$.
Let the matrix $\mathbf{P}_1 = \mathrm{diag}\{\{\sqrt{p_t-|j|}\}_{j = -p_t+1}^{p_t-1}\}$. Define the matrices $\mathbf{X}'$ such that $\mathbf{X}' = \mathbf{P}_1^{-1}\mathbf{X}$. Then $\mathcal{R}_{\Gamma}(\mathbf{X}) = \|\mathbf{X}'\|_1$ and
\begin{align}
\mathcal{R}^*_{\Gamma}(\mathbf{Z}) &= \sup\left\{\langle\langle \mathbf{P}_1\mathbf{Z},\mathbf{X}'\rangle\rangle \:|\: \|\mathbf{X}'\|_1 \leq 1\right\}\\\nonumber
&=\|\mathbf{P}_1\mathbf{Z}\|_{\infty}
\end{align}
since the dual of the 1-norm is the $\infty$-norm. 
From \cite{agarwal2012noisy}, \eqref{Eq:ParCond1} now takes the form
\begin{align}
\lambda_\Gamma &\geq 4\|\mathbf{P}_1 \mathbf{PW}\|_\infty
= \|\tilde{\mathbf{W}}\|_\infty
\end{align}
where
\begin{align}
\label{Eq:CS}
\tilde{{W}}_{j+p_t,i} &= \sum_{\ell\in\mathcal{K}(j)} {W}_{\ell,i}.
\end{align}
Hence
\begin{align}
|\tilde{{W}}_{j+p_t,i}|&\leq (p_t-|j|) \|\mathbf{W}\|_\infty\\\nonumber
\|\tilde{\mathbf{W}}\|_\infty &\leq p_t\|\mathbf{W}\|_\infty.
\end{align}
From \cite{agarwal2012noisy} (via the union bound), we have
\begin{align}
\mathrm{Pr}\left(\|\mathbf{W}\|_\infty> 8\rho (\mathbf{\Sigma}) \sqrt{\frac{\log p_t p_s}{n}}\right) \leq 2 \exp(-c_2\log(p_tp_s)),
\end{align}
giving
\begin{equation}
\mathrm{Pr}\left(\|\tilde{\mathbf{W}}\|_\infty > 8\rho (\mathbf{\Sigma})p_t \sqrt{\frac{\log p_t p_s}{n}}\right) \leq 2\exp(-c_2\log(p_tp_s)),
\end{equation}
which demonstrates that our choice for $\lambda_\Gamma$ is satisfactory with high probability.


As before, for the low rank portion \eqref{Eq:ParCond1} will hold if \cite{yang2013dirty}
\begin{equation}
\label{Eq:LambdaCond}
\lambda_\Theta \geq 4 \|\mathbf{P}\mathcal{R}(\hat{\mathbf{\Sigma}}_{SCM}-\mathbf{\Sigma}_0)\|.
\end{equation}

From Corollary \ref{Cor:ToepNorm} we have that for $t\geq f(\epsilon)$, $C$ an absolute constant, and $\alpha \geq 1$
\begin{align}
\|\mathbf{P}\mathcal{R}(\hat{\mathbf{\Sigma}}_{SCM}-\mathbf{\Sigma}_0)\| \leq  \frac{\|\mathbf{\Sigma}_0\|t_0}{1-2\epsilon}\frac{2p_t + p_s^2 + \log M}{n}
\end{align}
with probability at least $1-2M^{-t_0/4C}$ and otherwise
\begin{align}
\|\mathbf{P}\mathcal{R}(\hat{\mathbf{\Sigma}}_{SCM}-\mathbf{\Sigma}_0)\| \leq\frac{\|\mathbf{\Sigma}_0\|\sqrt{t_0}}{1-2\epsilon}\sqrt{\frac{2p_t + p_t^2 + \log M}{n}}
\end{align}
with probability at least $1-2M^{-t_0/4C}$. Hence, in the same way as in the non-Toeplitz case we have with high probability
\begin{align}
\|\hat{\tilde{\mathbf{L}}} - &\tilde{\mathbf{L}}_0\|_F \leq\\\nonumber &6\max \left\{k\|\mathbf{\Sigma}_0\| \sqrt{r}\max(\alpha^2,\alpha), 32 \rho(\mathbf{\Sigma}_0)\sqrt{\frac{s\log p_tp_s}{n}} \right\}
\end{align}
and since $\mathbf{L} = \mathbf{P}^T\tilde{\mathbf{L}}$ the theorem follows.

\end{proof}
\section{Gaussian Chaos Operator Norm Bound}
\label{App:B}
We first note the following corollary from \cite{tsiliArxiv}:
\begin{corollary}
\label{Corr:B1}
Let $\mathbf{x}\in \mathcal{S}_{p_t^2-1}$ and $\mathbf{y}\in \mathcal{S}_{p_s^2-1}$. Let $\mathbf{z}_i \sim \mathcal{N}(0,\mathbf{\Sigma}_0)$, $i=1,\dots,n$ be $p_tp_s$ dimensional iid training samples. Let $\mathbf{\Delta}_n =\mathcal{R}(\hat{\mathbf{\Sigma}}_{SCM} - \mathbf{\Sigma}_0)= \mathcal{R}(\frac{1}{n}\sum_i \mathbf{z}_i\mathbf{z}_i^T - \mathbf{\Sigma}_0)$. Then for all $\tau>0$,
\begin{equation}
\mathrm{Pr}(|\mathbf{x}^T\mathbf{\Delta}_n \mathbf{y}| \geq \tau) \leq \exp\left(\frac{-n\tau^2/2}{C_1 \|\mathbf{\Sigma}_0\|_2^2 + C_2\|\mathbf{\Sigma}_0\|_2 \tau}\right)
\end{equation}
where $C_1,C_2$ are absolute constants.

\end{corollary}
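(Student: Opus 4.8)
The plan is to recognize $\mathbf{x}^T\mathbf{\Delta}_n\mathbf{y}$ as a centered second-order Gaussian chaos, i.e.\ an empirical average of $n$ centered quadratic forms in the training vectors, and then to control its tail with a Bernstein-type inequality. First I would use that the Pitsianis--VanLoan rearrangement $\mathcal{R}$ merely permutes the entries of its argument and is therefore a Frobenius isometry whose inverse satisfies $\mathcal{R}^{-1}(\mathrm{vec}(\mathbf{X})\mathrm{vec}(\mathbf{Y})^T) = \mathbf{X}\otimes\mathbf{Y}$, the inverse of the defining relation $\mathcal{R}(\mathbf{A}\otimes\mathbf{B}) = \mathrm{vec}(\mathbf{A})\mathrm{vec}(\mathbf{B})^T$. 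Writing $\mathbf{x} = \mathrm{vec}(\mathbf{X})$, $\mathbf{y} = \mathrm{vec}(\mathbf{Y})$ with $\mathbf{X}\in\mathbb{R}^{p_t\times p_t}$, $\mathbf{Y}\in\mathbb{R}^{p_s\times p_s}$, and setting $\mathbf{Q} = \mathbf{X}\otimes\mathbf{Y}$, the adjoint identity gives
\begin{equation}
\mathbf{x}^T\mathbf{\Delta}_n\mathbf{y} = \langle\langle \mathbf{x}\mathbf{y}^T,\mathcal{R}(\mathbf{M})\rangle\rangle = \langle\langle \mathbf{Q},\mathbf{M}\rangle\rangle = \frac{1}{n}\sum_{i=1}^n\left(\mathbf{z}_i^T\mathbf{Q}\mathbf{z}_i - \mathbb{E}\,\mathbf{z}_i^T\mathbf{Q}\mathbf{z}_i\right),
\end{equation}
where $\mathbf{M} = \frac{1}{n}\sum_i\mathbf{z}_i\mathbf{z}_i^T - \mathbf{\Sigma}_0$. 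Since a quadratic form depends only on the symmetric part of its matrix, I may replace $\mathbf{Q}$ by $\mathbf{Q}_s = (\mathbf{Q}+\mathbf{Q}^T)/2$ throughout without changing the left side.

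Second, I would whiten: write $\mathbf{z}_i = \mathbf{\Sigma}_0^{1/2}\mathbf{g}_i$ with $\mathbf{g}_i\sim\mathcal{N}(0,\mathbf{I})$ iid, so that each summand becomes $\mathbf{g}_i^T\tilde{\mathbf{Q}}\mathbf{g}_i - \mathrm{trace}(\tilde{\mathbf{Q}})$ with $\tilde{\mathbf{Q}} = \mathbf{\Sigma}_0^{1/2}\mathbf{Q}_s\mathbf{\Sigma}_0^{1/2}$, a whitened Gaussian chaos. The two relevant norms are then controlled by the unit-sphere constraints together with the multiplicativity of Kronecker norms: $\|\mathbf{Q}_s\|_2 \le \|\mathbf{Q}\|_2 = \|\mathbf{X}\|_2\|\mathbf{Y}\|_2 \le \|\mathbf{X}\|_F\|\mathbf{Y}\|_F = \|\mathbf{x}\|\|\mathbf{y}\| = 1$ and likewise $\|\mathbf{Q}_s\|_F \le \|\mathbf{Q}\|_F = 1$, whence $\|\tilde{\mathbf{Q}}\|_2 \le \|\mathbf{\Sigma}_0\|_2$ and $\|\tilde{\mathbf{Q}}\|_F \le \|\mathbf{\Sigma}_0\|_2$. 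These are exactly the quantities that should appear as the variance proxy and the tail-rate parameter in the final bound.

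Third, I would establish the single-term sub-exponential moment generating function estimate. Diagonalizing $\tilde{\mathbf{Q}}$ with eigenvalues $a_k$, each centered term is $\sum_k a_k(h_k^2 - 1)$ with $h_k$ iid standard normal, a weighted sum of centered $\chi^2_1$ variables; a direct $\chi^2$ computation yields $\mathbb{E}\exp(\lambda(\mathbf{g}^T\tilde{\mathbf{Q}}\mathbf{g} - \mathrm{trace}\,\tilde{\mathbf{Q}})) \le \exp(c\lambda^2\|\tilde{\mathbf{Q}}\|_F^2/(1-c'\lambda\|\tilde{\mathbf{Q}}\|_2))$ for $0\le\lambda<1/(c'\|\tilde{\mathbf{Q}}\|_2)$. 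Because the $n$ summands are iid, the moment generating function of their average factorizes; applying the Chernoff bound, substituting the Step-2 norm estimates, and optimizing over $\lambda$ then produces the claimed form $\exp(-n\tau^2/2 / (C_1\|\mathbf{\Sigma}_0\|_2^2 + C_2\|\mathbf{\Sigma}_0\|_2\tau))$, with the two-sided statement following by symmetry (the resulting factor of two absorbed into the constants). The main obstacle is precisely this Step-3 Hanson--Wright estimate: obtaining the correct simultaneous dependence on $\|\tilde{\mathbf{Q}}\|_F$ (governing the Gaussian regime) and $\|\tilde{\mathbf{Q}}\|_2$ (governing the exponential tail), and verifying that passage to the symmetric part $\mathbf{Q}_s$ inflates neither norm. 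Everything else---the rearrangement-adjoint identity and the Kronecker norm factorizations---is routine algebra.
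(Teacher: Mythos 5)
The paper does not actually prove this corollary: it is imported verbatim from the cited reference \cite{tsiliArxiv} (``We first note the following corollary from \cite{tsiliArxiv}''), so there is no in-paper argument to compare against. Your self-contained proof sketch is correct and follows what is essentially the standard (and, as far as one can tell, the source's) route: use the fact that $\mathcal{R}$ is an entrywise permutation, hence an inner-product isometry with $\mathcal{R}^{-1}(\mathrm{vec}(\mathbf{X})\mathrm{vec}(\mathbf{Y})^T)=\mathbf{X}\otimes\mathbf{Y}$, to rewrite $\mathbf{x}^T\mathbf{\Delta}_n\mathbf{y}$ as an average of $n$ centered Gaussian quadratic forms in $\mathbf{Q}=\mathbf{X}\otimes\mathbf{Y}$; whiten; bound $\|\tilde{\mathbf{Q}}\|_2$ and $\|\tilde{\mathbf{Q}}\|_F$ by $\|\mathbf{\Sigma}_0\|_2$ via the multiplicativity of the spectral and Frobenius norms over Kronecker products and the unit-sphere constraints; and close with a Bernstein/Hanson--Wright moment-generating-function bound for weighted centered $\chi^2_1$ sums followed by Chernoff. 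All of these steps are sound, and you correctly identify that the symmetrization $\mathbf{Q}\mapsto\mathbf{Q}_s$ inflates neither norm. Two small caveats: (i) the factor of $2$ from the two-sided union bound cannot literally be ``absorbed into the constants'' sitting in the denominator of the exponent---one needs the standard extra remark that the bound is vacuous when the exponent is below $\ln 2$ and that otherwise halving the exponent dominates the prefactor (or simply keep the factor of $2$, as the statement arguably should); (ii) the one-sided MGF bound must be checked for both signs of $\lambda$ (equivalently for $-\tilde{\mathbf{Q}}$), which your eigenvalue decomposition handles but which deserves an explicit line. Neither affects the validity of the approach.
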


The proof (appropriately modified from that of a similar theorem in \cite{tsiliArxiv}) of Corollary \ref{Cor:ToepNorm} then proceeds as follows:
\begin{proof}
Define $\mathcal{N}(\mathcal{S}_{d'-1},\epsilon')$ as an $\epsilon'$ net on $\mathcal{S}_{d'-1}$. Choose $\mathbf{x}_1\in \mathcal{S}_{2p_t-2},\mathbf{y}_1\in \mathcal{S}_{p_s^2-1}$ such that $|\mathbf{x}_1^T \mathbf{P} \mathbf{\Delta}_n \mathbf{y}_1| = \|\mathbf{P}\mathbf{\Delta}_n\|_2$. By definition, there exists $\mathbf{x}_2 \in \mathcal{N}(\mathcal{S}_{2p_t-2},\epsilon') ,\mathbf{y}_2 \in \mathcal{N}(\mathcal{S}_{p_s^2-1},\epsilon')$ such that $\|\mathbf{x}_1-\mathbf{x}_2\|_2 \leq \epsilon',\|\mathbf{y}_1-\mathbf{y}_2\|_2 \leq \epsilon'$. Then
\begin{align}
|\mathbf{x}_1^T\mathbf{P}\mathbf{\Delta}_n \mathbf{y}_1| - |\mathbf{x}_2^T\mathbf{P}\mathbf{\Delta}_n \mathbf{y}_2| &\leq |\mathbf{x}_1^T\mathbf{P}\mathbf{\Delta}_n \mathbf{y}_1 - \mathbf{x}_2^T\mathbf{P}\mathbf{\Delta}_n \mathbf{y}_2|\\\nonumber
 &\leq 2\epsilon'\|\mathbf{P}\mathbf{\Delta}_n\|_2.
\end{align}
We then have
\begin{align}
&\|\mathbf{P}\mathbf{\Delta}_n\|_2(1-2\epsilon')\\\nonumber
 &\leq \max\left\{|\mathbf{x}_2^T\mathbf{P}\mathbf{\Delta}_n \mathbf{y}_2| : \mathbf{x}_2 \in \mathcal{N}(\mathcal{S}_{2p_t-2},\epsilon') ,\right.\\\nonumber
 &\quad \quad \: \left.\mathbf{y}_2 \in \mathcal{N}(\mathcal{S}_{p_s^2-1},\epsilon'),\|\mathbf{x}_1-\mathbf{x}_2\|_2 \leq \epsilon',\|\mathbf{y}_1-\mathbf{y}_2\|_2 \leq \epsilon' \right\}\\\nonumber
&\leq \max\left\{|\mathbf{x}_2^T\mathbf{P}\mathbf{\Delta}_n \mathbf{y}_2| : \mathbf{x}_2 \in \mathcal{N}(\mathcal{S}_{2p_t-2},\epsilon'),\right.\\\nonumber
&\quad \quad \: \left. \mathbf{y}_2 \in \mathcal{N}(\mathcal{S}_{p_s^2-1},\epsilon')\right\}
\end{align}
since $|\mathbf{x}_1^T \mathbf{P} \mathbf{\Delta}_n \mathbf{y}_1| = \|\mathbf{P}\mathbf{\Delta}_n\|_2$. Hence
\begin{align}
&\|\mathbf{P}\mathbf{\Delta}_n\|_2\\\nonumber
&\leq \frac{1}{1-2\epsilon'}\max_{\mathbf{x} \in \mathcal{N}(\mathcal{S}_{2p_t-2},\epsilon') ,\mathbf{y} \in \mathcal{N}(\mathcal{S}_{p_s^2-1},\epsilon')}|\mathbf{x}^T\mathbf{P}\mathbf{\Delta}_n \mathbf{y}|.
\end{align}
From \cite{tsiliArxiv}
\begin{equation}
\mathrm{card}(\mathcal{N}(\mathcal{S}_{d'-1},\epsilon')) \leq \left(1+\frac{2}{\epsilon'}\right)^{d'}
\end{equation}
which allows us to use the union bound.
\begin{align}
&\mathrm{Pr}(\|\mathbf{P} \mathbf{\Delta}_n\|_2 > \epsilon')\\\nonumber
&\leq \mathrm{Pr}\left(\max_{\mathbf{x} \in \mathcal{N}(\mathcal{S}_{2p_t-2},\epsilon') ,\mathbf{y} \in \mathcal{N}(\mathcal{S}_{p_s^2-1},\epsilon')}|\mathbf{x}^T\mathbf{P}\mathbf{\Delta}_n \mathbf{y}|\geq \epsilon(1-2\epsilon')\right)\\\nonumber
&\leq \mathrm{Pr}\left(\bigcup_{\mathbf{x} \in \mathcal{N}(\mathcal{S}_{2p_t-2},\epsilon') ,\mathbf{y} \in \mathcal{N}(\mathcal{S}_{p_s^2-1},\epsilon')}|\mathbf{x}^T\mathbf{P}\mathbf{\Delta}_n \mathbf{y}|\geq \epsilon(1-2\epsilon')\right)\\\nonumber
&\leq \mathrm{card}(\mathcal{N}(\mathcal{S}_{2p_t-2},\epsilon'))\mathrm{card}(\mathcal{N}(\mathcal{S}_{p_s^2-1},\epsilon'))\\\nonumber
&\quad \times \max_{\mathbf{x} \in \mathcal{N}(\mathcal{S}_{2p_t-2},\epsilon') ,\mathbf{y} \in \mathcal{N}(\mathcal{S}_{p_s^2-1},\epsilon')}\mathrm{Pr}(|\mathbf{x}^T\mathbf{P}\mathbf{\Delta}_n \mathbf{y}|\geq \epsilon(1-2\epsilon'))\\\nonumber
&\leq \left(1+\frac{2}{\epsilon'}\right)^{2p_t+p_s^2}\\\nonumber
&\quad \times \max_{\mathbf{x} \in \mathcal{N}(\mathcal{S}_{2p_t-2},\epsilon') ,\mathbf{y} \in \mathcal{N}(\mathcal{S}_{p_s^2-1},\epsilon')}\mathrm{Pr}(|\mathbf{x}^T\mathbf{P}\mathbf{\Delta}_n \mathbf{y}|\geq \epsilon(1-2\epsilon')).
\end{align}
Note that
\begin{align}
\|\mathbf{x}^T\mathbf{P}\|_2^2 =& \sum_j \frac{x_{j+p_t}^2}{p_t-|j|}(p_t-|j|)\\\nonumber
=& \sum_jx_j^2 = \|\mathbf{x}\|_2^2 = 1,
\end{align}
so $\mathbf{x}^T\mathbf{P} \in \mathcal{S}_{p_t^2-1}$. We can thus use Corollary \ref{Corr:B1}, giving
\begin{align}
&\mathrm{Pr}(\|\mathbf{P} \mathbf{\Delta}_n\|_2 > \epsilon')\\\nonumber
&\leq 2\left(1+\frac{2}{\epsilon'}\right)^{2p_t+p_s^2}\exp\left(\frac{-n\epsilon^2(1-2\epsilon')^2/2}{C_1\|\mathbf{\Sigma}_0\|_2^2 + C_2\|\mathbf{\Sigma}_0\|_2\epsilon(1-2\epsilon')}\right).
\end{align}
Two regimes emerge from this expression. The first is where $\epsilon \leq \frac{C_1\|\mathbf{\Sigma}_0\|_2}{C_2(1-2\epsilon')}$, which allows
\begin{align}
\mathrm{Pr}&(\|\mathbf{P} \mathbf{\Delta}_n\|_2 > \epsilon)\\\nonumber
&\leq 2\left(1+\frac{2}{\epsilon'}\right)^{2p_t+p_s^2}\exp\left(\frac{-n\epsilon^2(1-2\epsilon')^2/2}{2C_1\|\mathbf{\Sigma}_0\|_2^2}\right).
\end{align}
Choose
\begin{align}
\epsilon = \frac{\sqrt{t_0}\|\mathbf{\Sigma}_0\|_2}{1-2\epsilon'}\sqrt{\frac{2p_t + p_s^2 + \log M}{n}}.
\end{align}
This gives:
\begin{align}
\mathrm{Pr}&\left(\|\mathbf{P} \mathbf{\Delta}_n\|_2 > \frac{\sqrt{t_0}\|\mathbf{\Sigma}_0\|_2}{1-2\epsilon'}\sqrt{\frac{2p_t + p_s^2 + \log M}{n}}\right)\\\nonumber
&\leq 2\left(1+\frac{2}{\epsilon'}\right)^{2p_t+p_s^2}\exp\left(\frac{-t^2 (2p_t + p_s^2 + \log M)}{4C_1}\right)\\\nonumber
&\leq 2\left(\left(1+\frac{2}{\epsilon'}\right)e^{-\frac{t_0}{4C_1}}\right)^{2p_t+p_s^2}M^{-t_0/(4C_1)}\\\nonumber
&\leq 2M^{-t_0/(4C_1)}.
\end{align}
The second regime ($\epsilon > \frac{C_1\|\mathbf{\Sigma}_0\|_2}{C_2(1-2\epsilon')}$) allows us to set $\epsilon$ to
\begin{align}
\epsilon = \frac{t_0\|\mathbf{\Sigma}_0\|_2}{1-2\epsilon'}{\frac{2p_t + p_s^2 + \log M}{n}}
\end{align}
which gives
\begin{align}
\mathrm{Pr}&\left(\|\mathbf{P} \mathbf{\Delta}_n\|_2 > \frac{t_0\|\mathbf{\Sigma}_0\|_2}{1-2\epsilon'}{\frac{2p_t + p_s^2 + \log M}{n}}\right)\\\nonumber
&\leq 2\left(1+\frac{2}{\epsilon'}\right)^{2p_t+p_s^2}\exp\left(\frac{-t (2p_t + p_s^2 + \log M)}{4C_2}\right)\\\nonumber
&\leq 2M^{-t_0/(4C_2)}.
\end{align}
Combining both regimes (noting that $t_0>1$ and $\sqrt{t_0} C_1/C_2>1$) completes the proof.
\end{proof}

%
%
%
%
%
%
%
%
%
%
%
%

\end{appendices}

\bibliographystyle{ReferenceFormat}
\bibliography{CAMSAP_bib}

\end{document}